\newcounter{TempEqCnt}
\newtheorem{theorem}{Theorem}
\newtheorem{lemma}{Lemma}
\newtheorem{proposition}{Proposition}
\newtheorem{proof}{Proof}
\begin{document}
\bibliographystyle{IEEEtran}

\title{Performance Analysis of Joint Active User Detection and Channel Estimation for Massive Connectivity
\author{Jia-Cheng Jiang, \hspace{0.02in}Hui-Ming Wang, \emph{Senior Member}, \emph{IEEE}\hspace{0.02in} and H. Vincent Poor, \emph{Life Fellow}, \emph{IEEE}\hspace{0.02in}}
\thanks{Jia-Cheng Jiang and Hui-Ming Wang are with the
School of Information and Communications Engineering, Xi'an Jiaotong
University, Xi'an, 710049, Shaanxi, P. R. China, and also with the Ministry
of Education Key Lab for Intelligent Networks and Network Security, Xi'an, 710049, Shaanxi, P. R. China. (e-mail: j1143484496b@stu.xjtu.edu.cn; xjbswhm@gmail.com.)}
\thanks{H. Vincent Poor is with the Department of Electrical Engineering, Princeton University, Princeton, NJ 08544 USA (e-mail: poor@princeton.edu)}
}
\maketitle
\begin{abstract}
This paper considers joint active user detection (AUD) and channel estimation (CE) for massive connectivity scenarios with sporadic traffic. The state-of-art method under a Bayesian framework to perform joint AUD and CE in such scenarios is approximate message passing (AMP). However, the existing theoretical analysis of AMP-based joint AUD and CE can only be performed with a given fixed point of the AMP state evolution function, lacking the analysis of AMP phase transition and Bayes-optimality. In this paper, we propose a novel theoretical framework to analyze the performance of the joint AUD and CE problem by adopting the replica method in the Bayes-optimal condition. Specifically, our analysis is based on a general channel model, which reduces to particular channel models in multiple typical MIMO communication scenarios. Our theoretical framework allows ones to measure the optimality and phase transition of AMP-based joint AUD and CE as well as to predict the corresponding performance metrics under our model. To reify our proposed theoretical framework, we analyze two typical scenarios from the massive random access literature, i.e., the isotropic channel scenario and the spatially correlated channel scenario. Accordingly, our performance analysis produces some novel results for both the isotropic Raleigh channel and spatially correlated channel case.
\end{abstract}
\section{Introduction}

Recently, under the basic characteristics of mMTC, i.e, a large number of user devices and sporadic user traffic, the grant-free access strategy has been considered to allow the active devices to access the wireless network without a grant \cite{GraFrA2017, SenelGrant-Free2018}, which reduces both the access latency and signal processing overhead. Accordingly, the base station (BS) should simultaneously identify all the active user devices under grant-free random access. Furthermore, the BS is required to accurately acquire the channel state information for decoding uplink signals and executing downlink precoding after user activity detection. Hence, both active user detection (AUD) and channel estimation (CE) are required at the BS based on pilot sequences sent by the user devices.
Typically, the number of user devices in mMTC is very large, and thus assigning orthogonal pilot sequences to all user devices is prohibitive in practice, motivating the application of non-orthogonal pilot sequences in such situations. A central problem in the mMTC scenario is to jointly perform AUD and CE based on non-orthogonal pilot sequences.

Different from the sparse detection problem \cite{7563406, 7885116}, the joint AUD and CE requires both the support set and the corresponding amplitudes of the sparse vector, and can be formulated as a compressed sensing (CS) problem. Several CS-based solutions have been reported \cite{DiRe2020, LiuMassive12018, MRAcoCH2020, EPJAUDCE2019,jiang2021}. In \cite{DiRe2020}, a dimension reduction
method to reduce the pilot sequence length and computational complexity for joint AUD and CE has been proposed, which projects the original device state matrix onto a low-dimensional space by exploiting its sparse and low-rank structure. To fully utilize the statistical information of wireless channels, some Bayesian-based CS methods have been designed to achieve joint AUD and CE with higher accuracy. Specifically, the authors in \cite{LiuMassive12018} proposed an approximate message passing (AMP)-based joint AUD and CE strategy based on the randomly generated non-orthogonal pilot sequences for a massive MIMO system, and the statistical knowledge of both the channel and user sparsity was modeled using a prior distribution to facilitate the performance of AMP. Along this line, the authors in \cite{MRAcoCH2020} design an AMP-based algorithm to adaptively detect the active devices by exploiting the virtual
angular domain sparsity of the channels in an orthogonal frequency division multiplexing (OFDM) broadband system. Furthermore, an expectation propagation (EP)-based joint AUD and CE algorithm was proposed in \cite{EPJAUDCE2019} for massive access with a single-antenna BS, where the computationally intractable posterior probability of the involved sparse signals was approximated by a multivariate Gaussian distribution, enhancing the performance of joint AUD and CE. In addition, we have claimed in \cite{jiang2021} that the inherent temporal correlations of both the user channels and the active indicators between adjacent time slots can be used to enhance both the detection performance and channel estimation performance.

One critical issue for joint AUD and CE for massive connectivity is to analyze the performance of user detection and channel estimation. The AMP algorithm provides a corresponding theoretical framework called state evolution to accurately track the performance of AMP in each iteration. By using state evolution analysis, the authors in \cite{LiuSparse2018, LiuMassive12018} provide analytical characterizations of the missed detection and false alarm probabilities for device detection and channel error covariance for channel estimation under a massive MIMO Raleigh channel assumption. However, although state evolution provides a theoretical framework for analyzing the performance metrics of joint AUD and CE, there are some critical issues that are left untouched. First, the performance analysis is based on the fixed point of the state evolution function, but the connection between the performance and the specific choices of system parameters such as pilot length, transmit power, fraction of number of active users and number of antennas was not established. Second, for specific choices of system parameters, the AMP iterations are blocked in a sub-optimal fixed point, so that the Bayes-optimal AUD and CE performance cannot be achieved via the AMP framework \cite{2012Probabilistic}. The region of the system parameters where the AMP framework is sub-optimal cannot be measured via the state evolution analysis in \cite{LiuSparse2018, LiuMassive12018}. Finally, some related works \cite{RepMMVzhu2018, RepMMVG2018} have shown that the performance of AMP exhibits a \emph{phase transition} phenomenon where the AMP algorithm will exhibit disconnect performance variations with variations of the system parameters, which has not been analyzed for massive connectivity.


To address the above problems, in this paper, we propose a theoretical framework to analyze the performance of joint AUD and CE based on the replica method \cite{RepTan2002, 1985Entropy}, which is a classical tool for analyzing large systems that comes primarily from physics. The proposed framework established the connection between the system parameters and the performance of joint AUD and CE. Based on that, both the Bayes-optimal and the AMP achievable mean square error (MSE) can be predicted, and the region of system parameters where the AMP algorithm is sub-optimal can be determined. In addition, the phase transition phenomenon of our system can be analyzed, which guides the system design for massively connected networks.

Note that, different from \cite{LiuMassive12018}, where the performance analysis is performed based on the isotropic Raleigh channel model, the analysis in this paper is considered based on a more general channel model following the approach of \cite{JSDM2013}. Such a channel model has strong flexibility and reduces to particular channel models in multiple typical MIMO communication scenarios, such as the isotropic Raleigh channel model \cite{LiuSparse2018}, and the spatially correlated channel model \cite{MRAcoCH2020, CEMoG2015, CoCE2013}, etc. In particular, our performance analysis produces some novel results for both the isotropic Raleigh channel and spatially correlated channel case, and provides verifications for the analytical results in \cite{LiuSparse2018, LiuMassive12018}.
Our main contributions can be summarized as follow.
\begin{itemize}
\item
First, we consider a general grouping channel model for the joint AUD and CE scenario. We design a theoretical framework carefully tailored to our considered scenario based on the general idea of replica method. Based on that, we establish relations between the joint AUD and CE performance metrics and the system parameters: pilot length, transmit power, fraction of number of active users and number of BS antennas, under our considered channel model.

\item
Second, we analyze the isotropic channel scenario based on our theoretical framework. Concretely, we prove that the Bayes-optimal/AMP-achievable joint AUD and CE performance can be evaluated by a scalar valued function, which also provides a phase transition diagram. We further provide the analysis in the asymptotic MIMO regime and prove that the phase transition phenomenon disappears and the AMP can always achieve Bayes-optimal performance when the number of BS antennas is very large.

\item
Third, we analyze the performance of joint AUD and CE in the spatially correlated channel case. We prove that the performance of each user group can be separately evaluated by a particular free entropy function when the user groups are in mutually orthogonal subspaces, implying the per-group processing (PGP) will never bring the performance loss. In addition, we show that the spatially correlated channel is more likely to promote a phase transition compared with the isotropic channel, due to the small number of subspace dimensions and corresponding antenna gains, leading that
there will be a performance gap between AMP and Bayes-optimal performance. However, the pilot length required to step over the phase transition can be significantly reduced in that case.
\end{itemize}

%
%

\begin{figure}[!t]
\centering
\includegraphics[width=3.5in]{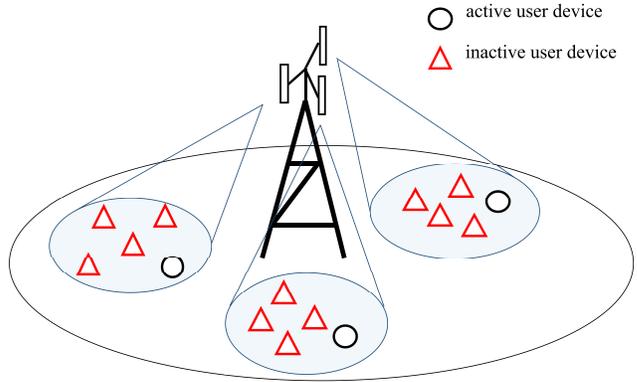}
\caption{Our model of the massive device communication network.}
\label{sys}
\end{figure}

\section{System Model}
\subsection{Massive Random Access Scenario}
We consider a uplink massive random access scenario in a single-cell cellular network with $N$ user devices. Each user is equipped with a single antenna, and the BS is equipped with $M$ antennas. The user traffic is assumed to be sporadic, i.e., most of user devices are idle at any given time.

Following the approach of \cite{JSDM2013}, in this paper, we consider the following channel model. We suppose that $N$ users are divided into $G$ groups based on the similarity of their covariance matrixes, where each group has $K_g$ users and the number of users is assumed same in each group for concise, i.e., $K_g = K = N/G$. Our model of the massive device communication network is shown in Fig. \ref{sys}. We denote $\boldsymbol h_{gk}\in\mathbb C^{M\times 1}$ as the channel response of $k$th user in the group $g$. We further assume that users in the same group have an identical channel probability density function (PDF) denoted as $Q_g(\boldsymbol h_{kg}) = \mathcal{CN}(\boldsymbol h_{kg}; \boldsymbol 0, \mathcal C_g)$ with covariance matrix $\mathcal C_g$. For the $k$th user in the group $g$, the user state, i.e., active or idle, is characterized by an activity indicator, denoted as $a_{gk}$, i.e., $a_{gk}= 1$ if it is active and $a_{gk}= 0$ otherwise.
We denote $\boldsymbol f_{kg}\in\mathbb C^{T\times 1}$ as the non-orthogonal pilot sequence that is assigned to the $k$th user in the group $g$ with independent and identically distributed (i.i.d.) components generated from a complex Gaussian distribution with zero mean and variance $1/T$, so that the pilot sequence has a unit norm\footnote{In the asymptotic regime that we consider, the pilot sequences have an unit power.}. Accordingly, the overall channel input-output is
\begin{equation}
\boldsymbol Y  = \boldsymbol F\boldsymbol S + \boldsymbol W = \sum\limits_{g,k} a_{gk}\boldsymbol f_{gk}\boldsymbol h_{gk}^{T}+ \boldsymbol W,  \label{channel}
\end{equation}
where $\boldsymbol F\in\mathbb C^{T\times N} \triangleq [\boldsymbol f_1,\dots, \boldsymbol f_{GK}]$ is the pilot matrix, $\boldsymbol S\in\mathbb C^{N\times M} \triangleq [\boldsymbol s_{11},\dots, \boldsymbol s_{GK}]^{T}$ with $\boldsymbol s_{gk}\in\mathbb C^{N\times 1}\triangleq a_{gk}\boldsymbol h_{gk}$, and $\boldsymbol W$ is the additive white Gaussian noise (AWGN) matrix with i.i.d. elements distributed as $\mathcal{CN}(0,\sigma_w^2)$. Each user device has a constant transmit power $P_t$, which is absorbed into the large-scale fading of the channel coefficients in (\ref{channel}) for concise.

We assume that the user devices are synchronized and each user decides whether or not to access the channel with a
probability $\rho$ in an i.i.d. manner \cite{LiuSparse2018, LiuMassive12018, MRAcoCH2020, HannakJoint2015}, and we further assume the fraction of active users $\rho$ and the PDF $Q_g(\boldsymbol h_{kg})$ of the $k$th user channel in the group $g$ are known for the BS\footnote{Note that the parameters learning strategy under mMTC scenario has been considered in the related work \cite{MRAcoCH2020, MAEC2020} based on the expectation-maximization method, which can be easily extended under our model. Since we focus on the theoretical analysis under our framework, the parameters learning issue is beyond the scope of this paper}. As a consequence, the corresponding PDF of the matrix $\boldsymbol S$ can be formulated as
\begin{equation}
p(\boldsymbol S) = \prod_{g=1}^{G}\prod_{k=1}^{K}[(1-\rho)\delta(\boldsymbol s_{gk})+\rho Q_g(\boldsymbol s_{gk})].\label{prs}
\end{equation}


%
%
Based on our model (\ref{channel}) and the prior distribution of the transmitted signals (\ref{prs})\footnote{Note that we assume joint AUD and CE is performed within one coherence time so that the channel coefficients remain unchanged.}, our task is to identify the active users, i.e., determine the active indicator $a_{gk}$ for each user, as well as estimate the corresponding channel coefficients $\boldsymbol h_{gk}$ for the active users.

\emph{Remark}: Note that our channel model assumption has a strong flexibility and reduces to particular channel models in multiple typical MIMO communication scenarios, such as the isotropic Rayleigh fading channels \cite{LiuSparse2018, LiuMassive12018}, where the channel coefficients of each user device are i.i.d., and the spatially correlated channels \cite{MRAcoCH2020, CEMoG2015, CoCE2013}, where different user groups are sufficiently well separated in the angular domain, and the channel covariance matrixes exhibit a low rank structure. We also note that although we consider the case where each group has the same number of users and the fraction of active users for each group is also the same, our following proposed theoretical framework can be easily extended to the case where the active ratio and user number are different for each group.

Based on our general channel model (\ref{prs}), in the following, we propose a novel theoretical framework by adopting replica method \cite{RepTan2002, 1985Entropy} to provide phase transition analysis, Bayes-optimality analysis of our scenario, and our theoretical framework can also provide the performance prediction of the AMP-based joint AUD and CE, which is the state-of-art algorithm under the Bayesian framework.

\section{Replica Analysis on Joint AUD and CE}
\label{a_re}

Our theoretical framework is based on statistical physics. In the statistical physics literature, the \emph{free entropy} of a system reflects the macro performance that characterizes some thermodynamic properties of the system \cite{1985Entropy}. Some related works in the signal processing literature have also shown that evaluating the fixed point of free entropy function provides the minimum MSE (MMSE) prediction for signal recovery and the achievable MSE for the AMP framework \cite{RepMMVzhu2018, RepMMVG2018}. For the massive connectivity scenario, evaluating the free entropy function of our system (\ref{channel}) provides an analytic tool to measure the performances metrics of joint AUD and CE.

Towards this end, we adopt the replica method in this paper to evaluate the free entropy function of our system. The replica method \cite{RepTan2002, 1985Entropy}, which is a classical tool in the statistical physics literature, provides an efficiency way to reduce the calculation of a certain free entropy into an optimization problem over specific covariance matrices. Such a reduction is based on a set of typical replica assumptions that include the self-averaging property, the validity of ``replica trick'', the ability to exchange certain limits and the replica symmetry \cite{2012Probabilistic}. It has been shown in many related works that the replica method provides correct predictions on many typical applications, such as signal processing \cite{RepCS2012} and physical layer communications \cite{RepADChe2018, RepADCW2017}.

Our analysis based on the replica method framework \cite{2012Probabilistic} is considered under a certain \emph{asymptotic regime} with $K\rightarrow\infty$, $T\rightarrow\infty$ and a fixed ratio $T/K\rightarrow\alpha$. Next, we provide the background of the statistical physics literature and obtain the free entropy function with respect to the system parameters under our scenario via replica method.
    Although these system parameters can not be infinity in the practice, we still utilize the asymptotic regime to facilitate analysis, since the system parameters are typically large in the massive connectivity. Authors in \cite{LiuMassive12018} have shown that the recovery performances in the practical settings match the theoretical results in the large system limit.

\setcounter{TempEqCnt}{\value{equation}}
\setcounter{equation}{6}
\begin{figure*}[tbp]
\begin{equation}
\begin{split}
\Phi(\mathcal E) = &-{\rm Tr}\left(\left(\boldsymbol\Delta+\frac{\mathcal EG}{\alpha}\right)^{-1}\left(\sum\limits_g\rho\mathcal C_g-\mathcal EG\right)\right)-\alpha M-\alpha \log\left|\boldsymbol\Delta+\frac{\mathcal EG}{\alpha}\right|\\
&+\sum\limits_g\int{\rm d}\boldsymbol s_g[(1-\rho)\delta(\boldsymbol s_g)+\rho Q_g(\boldsymbol s_g)]\int{\rm D}\boldsymbol z\log\left\{\int{\rm d}\boldsymbol x_g[(1-\rho)\delta(\boldsymbol x_g)+\rho Q_g(\boldsymbol x_g)]\right.\\
&\left.\exp\left\{-\boldsymbol x_g^H(\boldsymbol\Delta+\frac{\mathcal EG}{\alpha})^{-1}\boldsymbol x_g+2\mathfrak R\left(\boldsymbol x_g^H[(\boldsymbol\Delta+\frac{\mathcal EG}{\alpha})^{-1}\boldsymbol s_g+(\boldsymbol\Delta+\frac{\mathcal EG}{\alpha})^{-\frac{1}{2}}\boldsymbol z]\right)\right\}\right\}.\label{general_FE}
\end{split}
\end{equation}
\hrulefill
\end{figure*}
\setcounter{equation}{\value{TempEqCnt}}
\subsection{Free Entropy Function Derivation via Replica Method}
Under the statistical physics literature, a probabilistic inference approach to reconstruct the transmitted signal $\boldsymbol S$ aims to sample $\boldsymbol X$ from the following posterior distribution based on (\ref{channel}), given by
\begin{equation}
p(\boldsymbol X|\boldsymbol Y) = \frac{p(\boldsymbol X)}{Z}\prod_{t=1}^{T}\frac{\exp\left(-(\boldsymbol y_t-\boldsymbol f_t\boldsymbol X)\boldsymbol\Delta^{-1}(\boldsymbol y_t-\boldsymbol f_t\boldsymbol X)^{H}\right)}{\pi^M|\boldsymbol\Delta|},\label{posterior}
\end{equation}
where $\boldsymbol\Delta\triangleq \sigma_w^2\boldsymbol I$, $Z$ is the partition function of this distribution, and $p(\boldsymbol X) =\prod_{g,k}[(1-\rho)\delta(\boldsymbol x_{gk})+\rho Q_g(\boldsymbol x_{gk})]$, which can be regarded as the prior distribution of $\boldsymbol X$. Under the Bayes-optimal assumption, the prior distribution over $\boldsymbol X$ matches the PDF of transmitted signal $p(\boldsymbol S)$. As a consequence, $Z$ can be formulated as
\begin{align}
Z = \int {\rm d} \boldsymbol X&\prod_{t=1}^{T}\frac{1}{\pi^M|\boldsymbol\Delta|}e^{-(\boldsymbol y_t-\boldsymbol f_t\boldsymbol X)\boldsymbol\Delta^{-1}(\boldsymbol y_t-\boldsymbol f_t\boldsymbol X)^{H}}\nonumber\\
&\times\prod_{g=1}^{G}\prod_{k=1}^{K}[(1-\rho)\delta(\boldsymbol x_{gk})+\rho Q_g(\boldsymbol x_{gk})],\label{patition}
\end{align}
where $\boldsymbol f_t$ is the $t$th row of the pilot matrix $\boldsymbol F$. We note that above integration is performed over each element of matrix $\boldsymbol X$ and such a integral expression will be used throughout this paper. According to the statistical physics literature, the free entropy is defined as $\Phi \triangleq \log Z$. Utilizing the self-averaging assumption \cite{RepTan2002, 1985Entropy}: $\lim\limits_{N\to \infty} \Pr\left[\left|\frac{Z}{N}-\frac{\mathbb E(Z)}{N}\right|\ge\theta\right] = 0$, for any tolerance $\theta>0$, the free entropy can be reformulated as $\Phi = \frac{1}{N}\mathbb E_{\boldsymbol F,\boldsymbol S,\boldsymbol W}(\log Z)$. Hence, to determine the free entropy function, one requires to compute the average of the log-partition function $\mathbb E_{\boldsymbol F,\boldsymbol S,\boldsymbol W}(\log Z)$. Based on the standard procedure of the replica method \cite{2012Probabilistic}, we can alternatively determine the average log-partition function by the following relation
$
\mathbb E_{\boldsymbol F,\boldsymbol S,\boldsymbol W}(\log Z) = \lim\limits_{n\to 0}\frac{1}{n}\log\left(\mathbb E_{\boldsymbol F,\boldsymbol S,\boldsymbol W}Z^n\right)
$.
According to the \emph{replica trick} in \cite{2012Probabilistic, RepTan2002, 1985Entropy}, the quantity $\mathbb E_{\boldsymbol F,\boldsymbol S,\boldsymbol W}Z^n$ is carried out as if $n$ were an integer, and we take the fact that $n$ is a real number into consideration after obtaining a manageable enough expression. As a consequence, the free entropy can be formulated as
\begin{equation}
\Phi  = \lim\limits_{N\to \infty}\lim\limits_{n\to 0}\frac{1}{Nn}\log\left(\mathbb E_{\boldsymbol F,\boldsymbol S,\boldsymbol W}Z^n\right).\label{re_trick}
\end{equation}
Based on this, we have the following theorem.
\begin{theorem}
Define the reconstruction MSE matrix over the system (\ref{channel}) as $\mathcal E\triangleq \frac{1}{G}\sum_g \mathcal E_g$, where \begin{equation}\mathcal E_g \triangleq \frac{1}{K}\sum\limits_{k = 1}^K\left(\hat{\boldsymbol x}_{gk}(\boldsymbol Y)-\boldsymbol s_{gk}\right)\left(\hat{\boldsymbol x}_{gk}(\boldsymbol Y)-\boldsymbol s_{gk}\right)^H,\end{equation} where $[\hat{\boldsymbol x}_{11}(\boldsymbol Y),\dots, \hat{\boldsymbol x}_{GK}(\boldsymbol Y)] = \hat{\boldsymbol X}$ is the sample of the posterior distribution $p(\boldsymbol X|\boldsymbol Y)$ with given $\boldsymbol Y$ defined in (\ref{posterior}). Under our certain asymptotic regime defined in Sec. \ref{a_re}, and using the standard assumptions of the replica method, i.e., the self-averaging assumption, the replica-symmetry assumption, and the replica trick in (\ref{re_trick}), the free entropy (\ref{re_trick}) can be calculated by optimizing a function with respect to $\mathcal E$, which is given in the equation (\ref{general_FE}). We note that the term ${\rm D}\boldsymbol z$ is a complex Gaussian integration measure.

\end{theorem}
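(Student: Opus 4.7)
The plan is to carry out the standard replica computation for the matrix-valued signal $\boldsymbol X$ with a group-structured prior. Starting from the replica trick (\ref{re_trick}), I would first write $Z^n$ as an $n$-fold integral over replicas $\boldsymbol X^1,\dots,\boldsymbol X^n$, so that $\mathbb{E}_{\boldsymbol F,\boldsymbol S,\boldsymbol W} Z^n$ becomes an integral over $\{\boldsymbol X^a\}_{a=0}^n$ (with $\boldsymbol X^0 \triangleq \boldsymbol S$ the signal replica under the Bayes-optimal assumption) weighted by the product prior in (\ref{prs}). The disorder averages can then be executed in closed form: the noise $\boldsymbol W$ gives a Gaussian integration that exponentiates quadratic forms in $(\boldsymbol F\boldsymbol X^a-\boldsymbol F\boldsymbol S)$, and because each row $\boldsymbol f_t$ is i.i.d.\ $\mathcal{CN}(0,\boldsymbol I/T)$, averaging over $\boldsymbol F$ (for large $T$) produces an expression depending on the pilot matrix only through the replica overlap matrices $Q_g^{ab} = \frac{1}{K}\sum_k \boldsymbol x_{gk}^a (\boldsymbol x_{gk}^b)^H$ for each group $g$ and each replica pair $(a,b)$.

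Next I would introduce these group-wise overlaps as order parameters via Fourier representations of $\delta$-functions, with conjugate matrix variables $\hat Q_g^{ab}$. This decouples the original high-dimensional integral into (i) a ``channel/energetic'' term $\mathcal{G}_{\mathrm{ch}}(\{Q_g^{ab}\})$ coming from the disorder-averaged likelihood (an $M\times M$ matrix determinant and trace structure depending on $\boldsymbol\Delta$ and $\alpha = T/K$), (ii) a Legendre term $\sum_{g,a,b}\mathrm{Tr}(Q_g^{ab}\hat Q_g^{ab})$, and (iii) a ``prior/entropic'' term that is a single-site integral over a surrogate posterior in the $M$-dimensional variables $\boldsymbol x_g,\boldsymbol s_g$ weighted by the spike-and-slab prior $(1-\rho)\delta+\rho Q_g$.

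In the asymptotic regime $K,T\to\infty$ with $T/K\to\alpha$, the integral over $\{Q_g^{ab},\hat Q_g^{ab}\}$ is dominated by its saddle point. I would then impose the replica-symmetric ansatz $Q_g^{aa}=Q_{0,g}$, $Q_g^{ab}=q_g$ for $a\neq b$ (and the analogous form for conjugates), carry out the $n\to 0$ limit by expanding the determinants and single-site partition function to first order in $n$, and invoke the Nishimori identity valid in the Bayes-optimal setting to collapse the two independent overlap parameters onto a single one. Writing the remaining parameter as $\mathcal{E}_g = Q_{0,g} - q_g$ (the per-group reconstruction MSE matrix) and $\mathcal{E} = \frac{1}{G}\sum_g \mathcal{E}_g$, I would identify the conjugate parameter with $(\boldsymbol\Delta + \mathcal{E}G/\alpha)^{-1}$, coming from an effective AWGN channel whose noise covariance is the sum of thermal noise and replica ``interference.'' Substituting back produces exactly the three contributions displayed in (\ref{general_FE}): the trace and $-\alpha M-\alpha\log|\cdot|$ pieces from $\mathcal{G}_{\mathrm{ch}}$ plus the Legendre term, and the single-site Gaussian-averaged log-partition for the equivalent scalar (in $g$) estimation problem.

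The main obstacle I expect is the careful bookkeeping of the matrix-valued order parameters and their conjugates: the replica structure is indexed by $(a,b)$ pairs, while each entry is itself an $M\times M$ matrix with a per-group label $g$, so the Gaussian averages over $\boldsymbol F$ and $\boldsymbol W$ must be organized so that the large-$M$, large-$K$ limits and the $n\to 0$ limit commute cleanly under the RS ansatz. A secondary subtlety is verifying that only the contributions retained in (\ref{general_FE}) are of order $n$ as $n\to 0$: all $O(1)$ and $O(n^2)$ pieces must cancel by the Nishimori identity and the vanishing-overlap boundary condition. Once these cancellations are established, the result (\ref{general_FE}) follows by direct substitution and identification of $\mathcal{E}$ as the saddle-point order parameter, which by construction coincides with the MSE matrix defined in the theorem statement.
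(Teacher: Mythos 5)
Your proposal follows essentially the same route as the paper's proof: replica trick, exact Gaussian averaging over $\boldsymbol W$ and the i.i.d.\ Gaussian pilots so that the disorder-averaged likelihood depends only on group-wise overlap matrices (the paper's $\mathcal Q_g^{(a)}$, $\mathcal T_g^{(ab)}$, $\mathcal M_g^{(a)}$, with your replica-$0$ convention absorbing the signal--replica overlaps), introduction of conjugate order parameters via a delta-function identity, replica symmetry, Hubbard--Stratonovich decoupling into a single-site term, the $n\to 0$ limit, a saddle-point evaluation, and finally the Nishimori/Bayes-optimal identities collapsing the order parameters onto $\mathcal E = \frac{1}{G}\sum_g(\mathcal Q_g-\mathcal T_g)$ with effective noise covariance $\boldsymbol\Delta+\mathcal EG/\alpha$. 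This matches the paper's Appendix A step for step, so the approach is correct and not materially different.
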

\begin{proof}
Please see the appendix \ref{A}.
\end{proof}
We call the function (\ref{general_FE}) as \emph{free entropy function}\footnote{Since the free entropy function is closely related to the free entropy, we also use $\Phi$ to represent it. }.
According to \cite{RepMMVzhu2018, RepMMVG2018}, optimizing  $\mathcal E$ that maximizes the free entropy function (\ref{general_FE}) corresponds to the minimum MSE (MMSE) of specific choices of system parameters, i.e., $\rho$, $\alpha$, $\mathcal C_g$, $\forall g$, in the Bayes-optimal condition. Setting the first order derivative of $\Phi(\mathcal E)$ with respect to the matrix $\mathcal E$ into zero, we get the following equation.
\setcounter{equation}{7}
\begin{align}
\mathcal E = \frac{1}{G}\sum\limits_{g}&\mathbb E_{\boldsymbol s_g, \boldsymbol z}\left[\left(\eta_g\left(\boldsymbol s_g+(\boldsymbol\Delta+\frac{\mathcal EG}{\alpha})^{\frac{1}{2}}\boldsymbol z\right)-\boldsymbol s_g\right.\right)\nonumber\\
&\times\left(\left.\eta_g\left(\boldsymbol s_g+(\boldsymbol\Delta+\frac{\mathcal EG}{\alpha})^{\frac{1}{2}}\boldsymbol z\right)-\boldsymbol s_g\right)^H\right],\label{SE}
\end{align}
where $\boldsymbol z$ obeys $\mathcal{CN}(\boldsymbol z; \boldsymbol 0, \boldsymbol I)$ and $\eta_g(\cdot)$ is the Bayes-optimal denoiser of the noisy measurement $\hat{\boldsymbol s}_g\triangleq \boldsymbol s_g+(\boldsymbol\Delta+\frac{\mathcal EG}{\alpha})^{\frac{1}{2}}\boldsymbol z$ with the latent signal $\boldsymbol s_g$ distributed as $(1-\rho)\delta(\boldsymbol s_g)+\rho Q_g(\boldsymbol s_g)$. We note that $\eta_g(\cdot)$ is the corresponding MMSE denoiser of noisy measurement $\hat{\boldsymbol s}_g$.

We can also observe from (\ref{SE}) that there is a close match between the fixed point of the AMP state evolution function in \cite{BayatiDynamics2011} and the stationary point of the free energy function (\ref{general_FE}). In addition, according to \cite{RepMMVzhu2018, RepMMVG2018}, the AMP algorithm will be blocked by a particular local maximum point of (\ref{general_FE}). This means that seeking the stationary points of the free entropy function (\ref{general_FE}) also provides accurate prediction of the performance of the AMP algorithm. We remark that the recovery MSE of AMP is sometimes sub-optimal since only the global maximum point that maximizes the free entropy function (\ref{general_FE}) corresponds to the MMSE, and we call it AMP-achievable MSE.

In the following, we establish connections between the fixed point of the AMP state evolution function predicted by the free entropy function (\ref{general_FE}), and the performance metrics of joint AUD and CE. Typically, we adopt the likelihood ratio test (LRT) detection for AUD, and adopt MMSE criterion for CE.  We note that the author in \cite{OnRangan} claims that, in the case of large i.i.d. zero-mean Gaussian sensing matrix, the AMP methods exhibit fast convergence. Since in the massive access literature, the system is large enough and the i.i.d. zero-mean Gaussian sensing matrix is adopted, guaranteeing the convergence of the AMP algorithm in the following sections.

\setcounter{TempEqCnt}{\value{equation}}
\setcounter{equation}{13}
\begin{figure*}[tbp]
\begin{equation}
\begin{split}
\Phi(\tau) = &-\alpha M\left(\frac{\sigma_w^2}{\frac{1}{\alpha}\tau+\sigma_w^2}+\log(\sigma_w^2+\frac{1}{\alpha}\tau)\right)
+M\sum\limits_g\frac{(1-\rho)\sigma_g^2}{\sigma_g^2+\sigma_w^2+\frac{1}{\alpha}\tau}\\
&+\sum\limits_{g}\int{\rm D}\boldsymbol z\rho\log\left[(1-\rho)\exp\left\{-\frac{||\boldsymbol z||^2\sigma_g^2}{\sigma_w^2+\frac{1}{\alpha}\tau}\right\}+\rho\left(\frac{\frac{1}{\alpha}\tau+\sigma_w^2}{\frac{1}{\alpha}\tau+\sigma_w^2+\sigma_g^2}\right)^M\right]\\
&+\sum\limits_{g}\int{\rm D}\boldsymbol z(1-\rho)\log\left[(1-\rho)\exp\left\{-\frac{||\boldsymbol z||^2\sigma_g^2}{\sigma_g^2+\sigma_w^2+\frac{1}{\alpha}\tau}\right\}+\rho\left(\frac{\frac{1}{\alpha}\tau+\sigma_w^2}{\frac{1}{\alpha}\tau+\sigma_w^2+\sigma_g^2}\right)^M\right].\label{FE_tau}
\end{split}
\end{equation}
\hrulefill
\end{figure*}
\setcounter{equation}{\value{TempEqCnt}}

\subsection{Prediction of AMP-based Joint AUD and CE}
\label{pr_audce}



After executing the AMP algorithm on the transmitted signal $\boldsymbol Y$ in (\ref{channel}), the overall estimation problem is decoupled as a sequence of vector-valued estimation problems. For the users in the group $g$, the decoupled signal model is $\hat{\boldsymbol s}_g\triangleq \boldsymbol s_g+\boldsymbol \Sigma^{\frac{1}{2}}\boldsymbol z$, where we define $\boldsymbol \Sigma\triangleq \boldsymbol\Delta+\frac{\mathcal E^{\star}G}{\alpha}$ as the equivalent noise covariance matrix. The $\mathcal E^{\star}$ denotes the fixed point of AMP state evolution function that fulfills equation (\ref{SE}), so that it is also a stationary point of the free entropy function (\ref{general_FE}). Note that we have dropped the subscript about the user indicator $k$, much as the following expressions, since all the users within a group share a common probabilistic model.

As a consequence, for the user devices in the group $g$, the AUD rule based on log-likelihood ratio (LLR) with a threshold $l_g$ can be formulated as
\begin{align}
\text{LLR}(\hat{\boldsymbol s}_g) =& \log\left(\frac{p(\hat{\boldsymbol s}_g|a_g = 1)}{p(\hat{\boldsymbol s}_g|a_g = 0)}\right)\nonumber\\
 =&\hat{\boldsymbol s}_g^H\tilde{\boldsymbol\Sigma}\hat{\boldsymbol s}_g+\log\frac{\left|\boldsymbol\Sigma\right|}{\left|\mathcal C_g+\boldsymbol\Sigma\right|}>l_g,\label{llr}
\end{align}
where we define $\tilde{\boldsymbol\Sigma}\triangleq\boldsymbol\Sigma^{-1}-\left(\mathcal C_g+\boldsymbol\Sigma\right)^{-1}$. Accordingly, the missed detection and false alarm probabilities $P_{M}$ and $P_{F}$ can be obtained via
\begin{align}
P_{F} = \Pr\left\{\hat{\boldsymbol s}_g^H\tilde{\boldsymbol\Sigma}\hat{\boldsymbol s}_g>l_g'|a_g = 0\right\},\label{pf}\\
P_{M} = \Pr\left\{\hat{\boldsymbol s}_g^H\tilde{\boldsymbol\Sigma}\hat{\boldsymbol s}_g<l_g'|a_g = 1\right\}\label{pm},
\end{align}
where $l_g' \triangleq l_g-\log\left|\boldsymbol\Sigma\right|+\log\left|\mathcal C_g+\boldsymbol\Sigma\right|$.

We then consider the performance analysis for CE. Since the CE for the active users are performed after the AUD, we thus consider the mean of conditional posterior distribution $p(\boldsymbol s_g|\hat{\boldsymbol s}_g, a_g = 1)$ as the channel estimator for active user devices. Based on the definition of noisy measurement $\hat{\boldsymbol s}_g$,
after performing the AMP algorithm, the CE for the active user devices in the group $g$ is
\begin{equation}
\hat{\boldsymbol h}_g = \left(\boldsymbol\Sigma^{-1}+\mathcal C_g^{-1}\right)^{-1}\boldsymbol \Sigma^{-1}\hat{\boldsymbol s}_g.\label{chan_est}
\end{equation}

For the ease of analysis, we only consider the CE error in the case where the AUD is executed perfectly. Under such a circumstance, the CE error matrix can be formulated as
\begin{align}
\mathbb E\left[\left(\boldsymbol s_g-\hat{\boldsymbol h}_g\right)\left(\boldsymbol s_g-\hat{\boldsymbol h}_g\right)^H\right]
 = \left(\mathcal C_g^{-1}+\boldsymbol\Sigma^{-1}\right)^{-1},\label{cha_err_ma}
\end{align}
where the expectation is taken over $p(\boldsymbol s_g,\hat{\boldsymbol s}_g| a_g = 1)$.
As a consequence, the performance of joint AUD and CE can be analyzed after obtaining the stationary point $\mathcal E^{\star}$ of the free entropy (\ref{general_FE}). We note that the equation (\ref{general_FE}) and the corresponding performance of joint AUD and CE in this section is based on the general channel model (\ref{prs}). In the following sections, we will show our theoretical framework can reduce to two typical scenarios in the massive random access literature, i.e., the isotropic channel scenario and the spatially correlated channel scenario. Concretely, in Sec \ref{iso}, we provide the performance analysis of the isotropic channel case. While in the Sec. \ref{spc}, we provide the performance analysis of the spatially correlated channel case.




\section{Isotropic Channel}\label{iso}
In this section, we consider a typical scenario in wireless communications, where there are many small reflectors around the BS, and the Rayleigh fading MIMO channel with isotropic channel is assumed \cite{LiuSparse2018, LiuMassive12018}. In this case, the Rayleigh channel components of each user group are assumed to be i.i.d. complex Gaussian with zero means and unit variances across all the BS antennas, and function $Q_g(\boldsymbol h_g)$ in the prior distribution over the channel in the $g$th user group is $Q_g(\boldsymbol h_g) = \mathcal {CN}(\boldsymbol h_g;0,\sigma_g^2\boldsymbol I)$, where $\sigma_g^2$ is the product of the large scale coefficient and the user transmitted power. In the following, we will analyze the performance of joint AUD and CE under this case by first deriving the corresponding free entropy function as well as its expression under an asymptotic massive MIMO regime. Then, the performance analysis of joint AUD and CE is given.

\subsection{Free Entropy Function}
Before deriving the concrete free entropy function, we note that it is proved in \cite{LiuMassive12018} that the equivalent noise covariance matrix in the AMP state evolution iterations always in the form of a diagonal matrix with identical diagonal entries in the isotropic Rayleigh channel case, so that the fixed point of the state evolution function (\ref{SE}) remains the same form. We therefore restrict the domain of $\mathcal E$ in the form of $\mathcal E  = \tau G^{-1}\mathbf I$, and the newly derived free entropy function with respect to $\tau$ is in the following theorem.

\begin{theorem}
With the assumption of the isotropic Rayleigh channel, i.e., $Q_g(\boldsymbol s_g) = \mathcal {CN}(\boldsymbol s_g;0,\sigma_g^2\mathbf I)$ and the matrix $\mathcal E$ is diagonal with identical entries, i.e., $\mathcal E = \tau G^{-1}\mathbf I$,
the free entropy function can be reformulated as a scalar valued function, given by (\ref{FE_tau}).

\end{theorem}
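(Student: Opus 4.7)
The plan is to start from the general free entropy function (\ref{general_FE}) and substitute the isotropic Rayleigh parameters $\boldsymbol\Delta=\sigma_w^2\mathbf{I}$, $\mathcal C_g=\sigma_g^2\mathbf{I}$, and the ansatz $\mathcal E=\tau G^{-1}\mathbf{I}$ (whose legitimacy is inherited from the state-evolution structure proved in \cite{LiuMassive12018}, as the preceding remark notes). Under these choices, the matrix $\boldsymbol\Delta+\mathcal EG/\alpha$ collapses to $\Sigma\mathbf{I}$ with $\Sigma\triangleq\sigma_w^2+\tau/\alpha$, so every inverse, square root, trace, and log-determinant reduces to a scalar expression: the trace term yields $-M(\rho\sum_g\sigma_g^2-\tau)/\Sigma$, the log-determinant yields $-\alpha M\log\Sigma$, and $-\alpha M$ carries over unchanged.

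Next, I would evaluate the inner $\boldsymbol x_g$-integral by splitting the Bernoulli--Gaussian prior into its delta-mass and Gaussian pieces. The delta contributes $(1-\rho)$, while the Gaussian piece is a standard multivariate complex Gaussian integral over $\boldsymbol x_g$; completing the square and using $\mathcal C_g=\sigma_g^2\mathbf I$ gives a closed form that depends on the outer variables only through $\|\boldsymbol\mu\|^2$, where $\boldsymbol\mu=\Sigma^{-1}\boldsymbol s_g+\Sigma^{-1/2}\boldsymbol z$. Writing $\boldsymbol y\triangleq\boldsymbol s_g+\sqrt\Sigma\,\boldsymbol z$ makes $\|\boldsymbol\mu\|^2=\|\boldsymbol y\|^2/\Sigma^2$, which is the cleanest form for the next step.

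Then I would treat the outer integrals separately for the two Bernoulli branches. In the idle branch ($\boldsymbol s_g=0$) one has $\|\boldsymbol\mu\|^2=\|\boldsymbol z\|^2/\Sigma$ directly. In the active branch ($\boldsymbol s_g\sim Q_g$), $\boldsymbol y\sim\mathcal{CN}(\boldsymbol 0,(\sigma_g^2+\Sigma)\mathbf I)$, so a single reparametrization $\boldsymbol y=\sqrt{\sigma_g^2+\Sigma}\,\tilde{\boldsymbol z}$ with $\tilde{\boldsymbol z}$ standard complex Gaussian absorbs the joint $(\boldsymbol s_g,\boldsymbol z)$ integration into one standard Gaussian measure. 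At this point both log-integrands have the shape $\log(A+Be^{X})$; applying the elementary identity $\log(A+Be^{X})=\log(Ae^{-X}+B)+X$ rewrites each integrand as the logarithm appearing in (\ref{FE_tau}) plus a linear term in $\|\tilde{\boldsymbol z}\|^2$ (active) or $\|\boldsymbol z\|^2$ (idle). Integrating these linear residuals against the $M$-dimensional standard complex Gaussian measure produces precisely $\rho M\sigma_g^2/\Sigma$ and $(1-\rho)M\sigma_g^2/(\sigma_g^2+\Sigma)$ per group.

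Finally I would collect terms. The active-branch correction $\rho M\sum_g\sigma_g^2/\Sigma$ cancels the matching piece of the trace, leaving $M\tau/\Sigma-\alpha M-\alpha M\log\Sigma$; substituting $\tau/\Sigma=\alpha-\alpha\sigma_w^2/\Sigma$ collapses this to $-\alpha M\bigl(\sigma_w^2/\Sigma+\log\Sigma\bigr)$, which is the leading line of (\ref{FE_tau}). The idle-branch correction supplies the $M\sum_g(1-\rho)\sigma_g^2/(\sigma_g^2+\Sigma)$ term, and the two remaining log-integrals are exactly the last two lines of (\ref{FE_tau}). The main obstacle I anticipate is step three: choosing the right reparametrization to kill the $\boldsymbol s_g$-dependence and then picking the correct side of the identity $\log(A+Be^X)=\log(Ae^{-X}+B)+X$ so that the residual linear terms are integrable in closed form and so that the exponential factor left inside the log carries the asymmetric coefficients $\sigma_g^2/\Sigma$ versus $\sigma_g^2/(\sigma_g^2+\Sigma)$ demanded by (\ref{FE_tau}); once that rearrangement is pinned down, the rest is routine scalar algebra.
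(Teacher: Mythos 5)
Your proposal is correct and follows essentially the same route as the paper's Appendix~\ref{T2}: substituting the isotropic ansatz into (\ref{general_FE}) to get the scalarized form, splitting the Bernoulli--Gaussian prior, performing the inner Gaussian integral, and merging $(\boldsymbol s_g,\boldsymbol z)$ into a single effective Gaussian for the active branch are precisely the paper's steps (\ref{tau1}), (\ref{quai1})--(\ref{quai2}). Your final step---the identity $\log(A+Be^{X})=X+\log(Ae^{-X}+B)$, the residuals $\rho M\sigma_g^2/\Sigma$ and $(1-\rho)M\sigma_g^2/(\sigma_g^2+\Sigma)$, and the cancellation against the trace term---simply makes explicit the algebra the paper compresses into ``do some math,'' and it checks out against (\ref{FE_tau}).
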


\begin{proof}
Please see the appendix \ref{T2}.
\end{proof}

Note that the MMSE under the isotropic channel case can be derived by calculating the global maximum point of the free entropy function (\ref{FE_tau}), and the fixed point of the AMP state evolution can be predicted by the largest $\tau$ that
associated with a local maximum of (\ref{FE_tau}). As we shall see in the following Sec. \ref{Num_1}, there exists a region of system parameters, i.e., pilot length $L$, transmit power $P_t$, fraction of number of active user devices $\rho$ and number of BS antennas $M$, where the MSE of AMP is blocked by a local maximum point, so that the AMP algorithm is suboptimal. However, the following proposition indicates that when considering an asymptotic MIMO regime with the number of BS antennas go to infinity, the region where the AMP algorithm is suboptimal vanishes.


\begin{proposition}
In an asymptotic massive MIMO regime, i.e., $M\to \infty$, by neglecting some constants and the irrelevant factor, the free entropy function (\ref{FE_tau}) is reduced to
\setcounter{equation}{14}
\begin{align}
\Phi(\tau) = &-\frac{\alpha\sigma_w^2}{\frac{1}{\alpha}\tau+\sigma_w^2}-\alpha\log(\sigma_w^2+\frac{1}{\alpha}\tau)\nonumber\\
&-\rho\sum\limits_g\log(1+\frac{\sigma_g^2}{\sigma_w^2+\frac{1}{\alpha}\tau}),\label{FE_tau_limit}
\end{align}
which has and only has one local maximum point, and can be derived by calculating the equation
\begin{equation}
\tau = \rho\sum\limits_g\left(\sigma_g^{-2}+\left(\frac{1}{\alpha}\tau+\sigma_w^2\right)^{-1}\right)^{-1}.\label{t_opti}
\end{equation}
\end{proposition}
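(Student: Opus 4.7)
The plan is to proceed in two stages: (i) derive the asymptotic scalar form (14) from the finite-$M$ free entropy (13) via a Laplace-type analysis of the two Gaussian integrals, and (ii) verify that the resulting one-dimensional function admits a unique stationary point characterized by (15). For stage (i), I would first normalize by the ``irrelevant factor'' $M$; the first line of (13) already matches the first two summands of (14). The real work lies in the integrals $\int {\rm D}\boldsymbol z\, \rho \log[\cdots]$ and $\int {\rm D}\boldsymbol z\, (1-\rho)\log[\cdots]$. Setting $A_g = \sigma_g^2/(\sigma_w^2+\tau/\alpha)$, $A'_g = A_g/(1+A_g)$, and $B_g = 1/(1+A_g)$, each logarithm has the form $\log[(1-\rho)e^{-||\boldsymbol z||^2 A}+\rho e^{-M\log(1+A_g)}]$ for $A\in\{A_g,A'_g\}$. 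Since $||\boldsymbol z||^2/M\to 1$ for $\boldsymbol z\sim\mathcal{CN}(0,I_M)$ with chi-squared large-deviation rate, the dominant term at typical $\boldsymbol z$ is identified by comparing the per-unit-$M$ exponents $A$ and $\log(1+A_g)$. By $\log(1+x)<x$, in the first integral ($A=A_g$) the constant term $\rho B_g^M$ dominates, yielding $-\rho\sum_g\log(1+\sigma_g^2/(\sigma_w^2+\tau/\alpha))$ after integration. By $A'_g = x/(1+x)|_{x=A_g}<\log(1+A_g)$, in the second integral the exponential dominates, giving $-(1-\rho)\sum_g\sigma_g^2/(\sigma_g^2+\sigma_w^2+\tau/\alpha)$ per unit $M$, which exactly cancels the $M$-linear term on the second line of (13). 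Rare-event contributions where the subdominant term takes over are controlled by the exponential decay of $\Pr(||\boldsymbol z||^2/M > r^*_g)$ (or $<r^{**}_g$) and vanish after normalization. Collecting the surviving terms gives (14).

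For stage (ii), I would substitute $u = \sigma_w^2 + \tau/\alpha$, which recasts (14) as $\Phi = -\alpha\sigma_w^2/u - \alpha\log u - \rho\sum_g\log(u+\sigma_g^2) + \rho G\log u$. Setting $d\Phi/du = 0$, multiplying by $u$, and rearranging gives $\alpha(u-\sigma_w^2) = \rho\sum_g u\sigma_g^2/(u+\sigma_g^2)$; rewriting $u\sigma_g^2/(u+\sigma_g^2) = (\sigma_g^{-2}+u^{-1})^{-1}$ and restoring $\tau = \alpha(u-\sigma_w^2)$ yields exactly (15). For uniqueness, the right-hand side $F(u) = \rho\sum_g u\sigma_g^2/(u+\sigma_g^2)$ is strictly increasing, strictly concave, and bounded above by $\rho\sum_g\sigma_g^2$, while the left-hand side is linear with positive slope $\alpha$. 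Since $F(\sigma_w^2)>0$, the two curves intersect at exactly one $u^\star>\sigma_w^2$. A sign check of $d\Phi/du$ at $u=\sigma_w^2$ (strictly positive) and as $u\to\infty$ (eventually negative) confirms this critical point is a local maximum, and uniqueness of the critical point shows it is the only one.

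The main obstacle I expect is the Laplace-method step: rigorously bounding the tail contributions of $||\boldsymbol z||^2$ and showing they are $o(M)$ before division by $M$, uniformly in $\tau\geq 0$ so that the argument is valid on the full domain where $\Phi$ is to be optimized. Once (14) is in hand, the scalar-calculus derivation of (15) and the concave-vs.-linear intersection proof of uniqueness are essentially routine.
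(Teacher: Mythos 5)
Your proposal is correct and follows essentially the same route as the paper's proof: the paper also uses the inequality $e^{x}\ge 1+x$ (equivalently $\log(1+x)<x$) together with $\|\boldsymbol z\|^{2}/M\to 1$ to argue that the $\rho(\cdot)^{M}$ term dominates in the first integral and the exponential dominates in the second (cancelling the $(1-\rho)$ line), and then obtains (\ref{t_opti}) and uniqueness by a single-crossing argument between an increasing and a decreasing term of the derivative. Your $u$-substitution and concave-versus-linear phrasing of the uniqueness step is only a cosmetic repackaging of that argument.
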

\begin{proof}
Please see the appendix \ref{P1}.
\end{proof}
Therefore, with an asymptotic number of BS antennas, since there is only one local maximum point, the AMP algorithm always achieves Bayes-optimal performance in the isotropic Rayleigh channel case, and the fixed point of AMP state evolution can be derived by solving (\ref{t_opti}).

\subsection{Prediction of AMP-based AUD and CE}

\begin{figure*}[t]
\begin{center}
\subfigure[$P_t = 18$dBm.]{
\begin{minipage}[t]{0.31\textwidth}
\includegraphics[width=2.4 in]{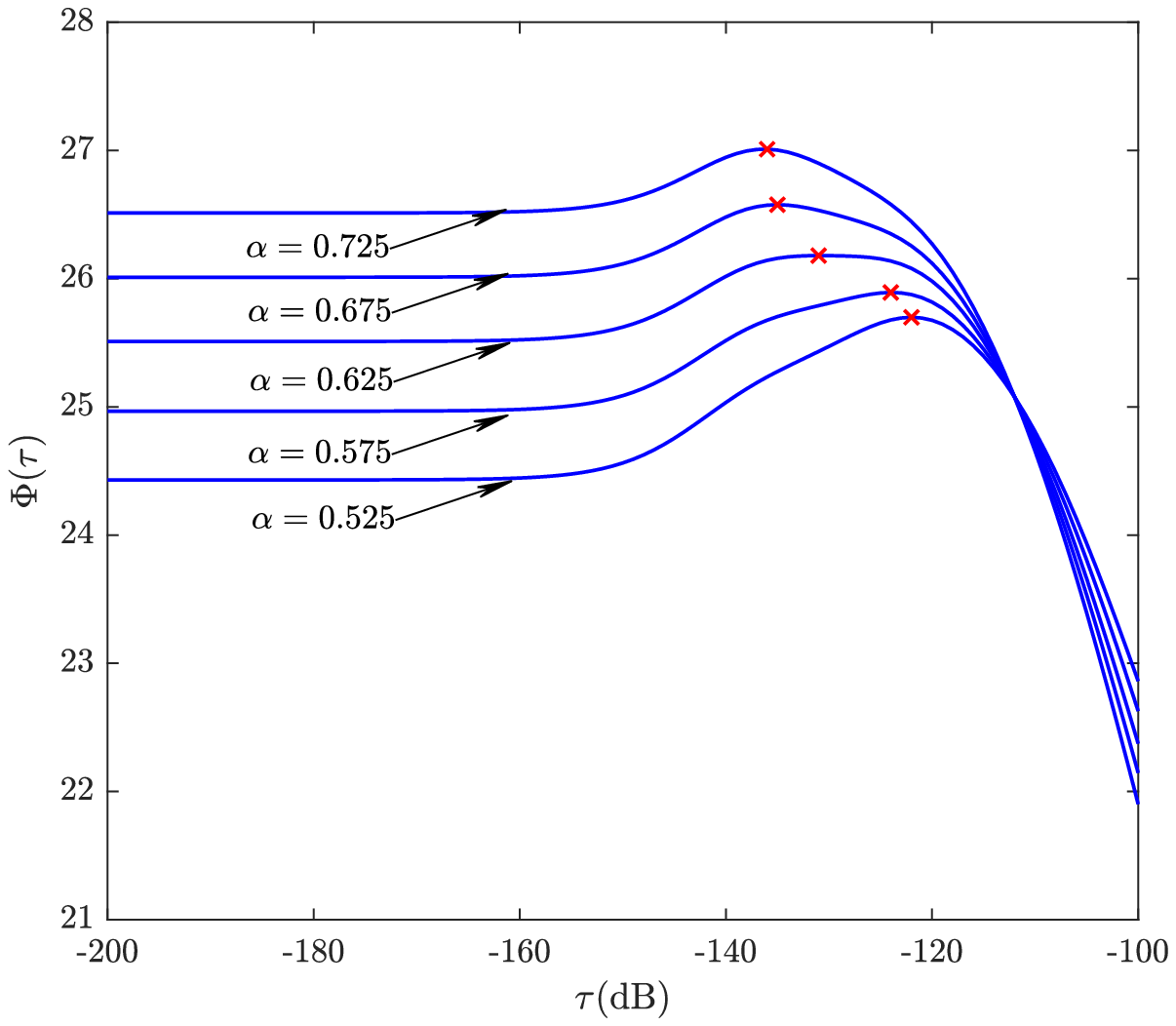}\label{18M2}
\end{minipage}
}
\hfill
\subfigure[$P_t = 23$dBm.]{
\begin{minipage}[t]{0.31\textwidth}
\includegraphics[width=2.4 in]{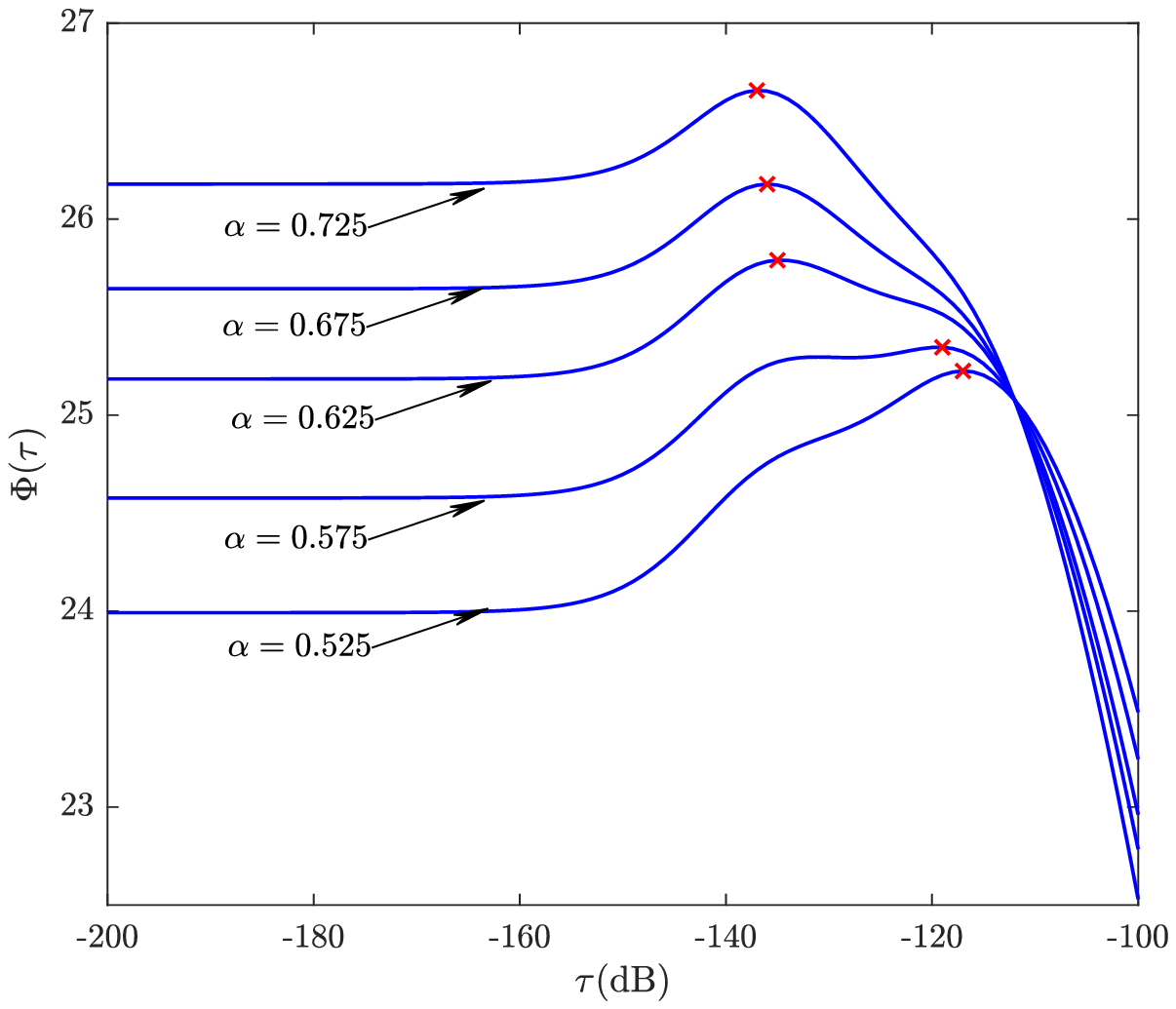}\label{23M2}
\end{minipage}
}
\subfigure[$P_t = 33$dBm.]{
\begin{minipage}[t]{0.31\textwidth}
\includegraphics[width=2.4 in]{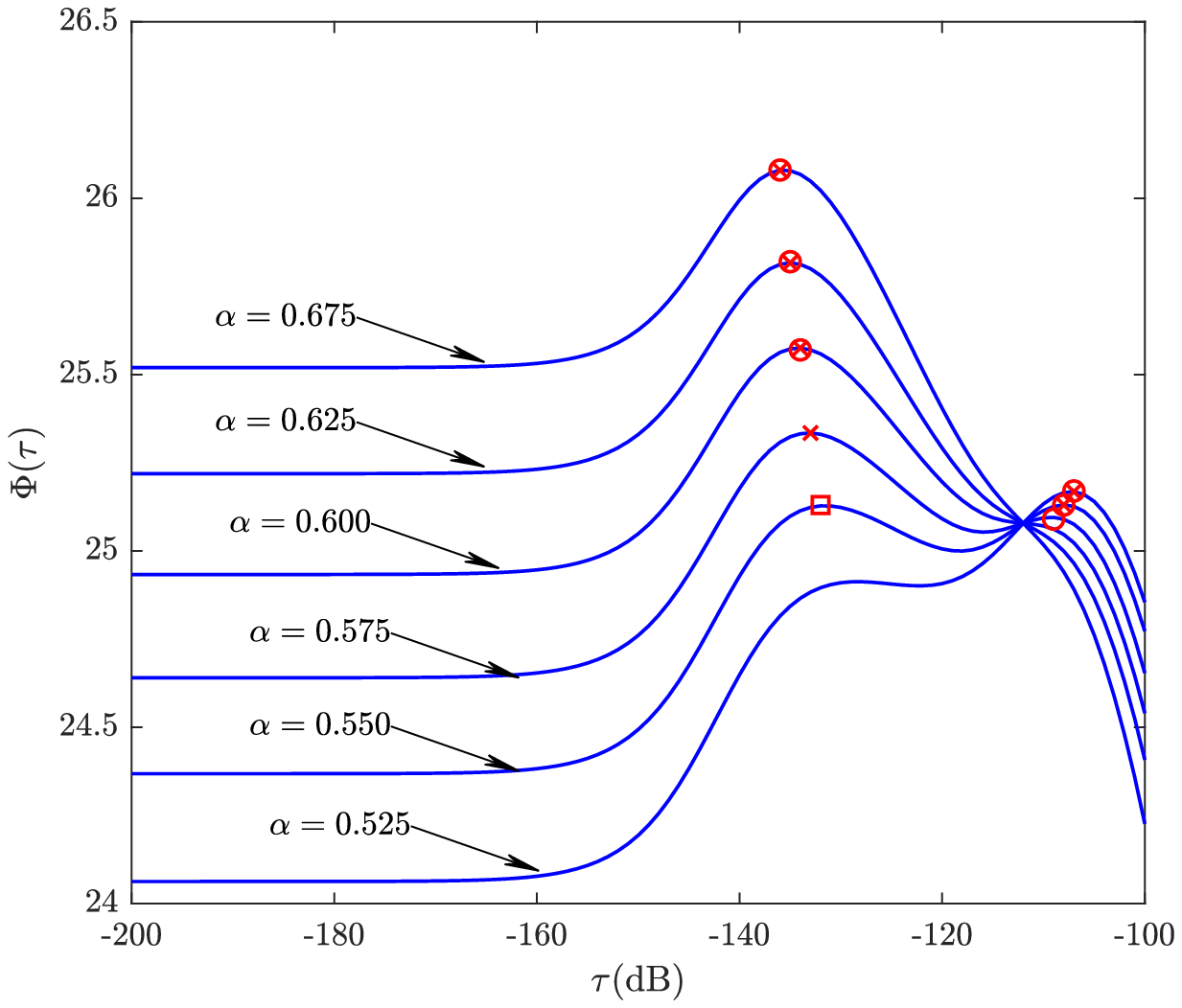}\label{33M2}
\end{minipage}
}
\caption{Free entropy as a function of MSE with $M = 2$ BS antennas under different settings of $\alpha$ and $P_t$. (The ``red cross'' denotes the MMSE point, the ``red circle'' denotes the AMP-achievable MSE point, the ``red square'' denotes the unachievable local maxima MSE point, and such marks will also be used in the following figures.)}
\end{center}
\begin{center}
\subfigure[$M = 4$.]{
\begin{minipage}[t]{0.31\linewidth}
\centering
\includegraphics[width=2.4in]{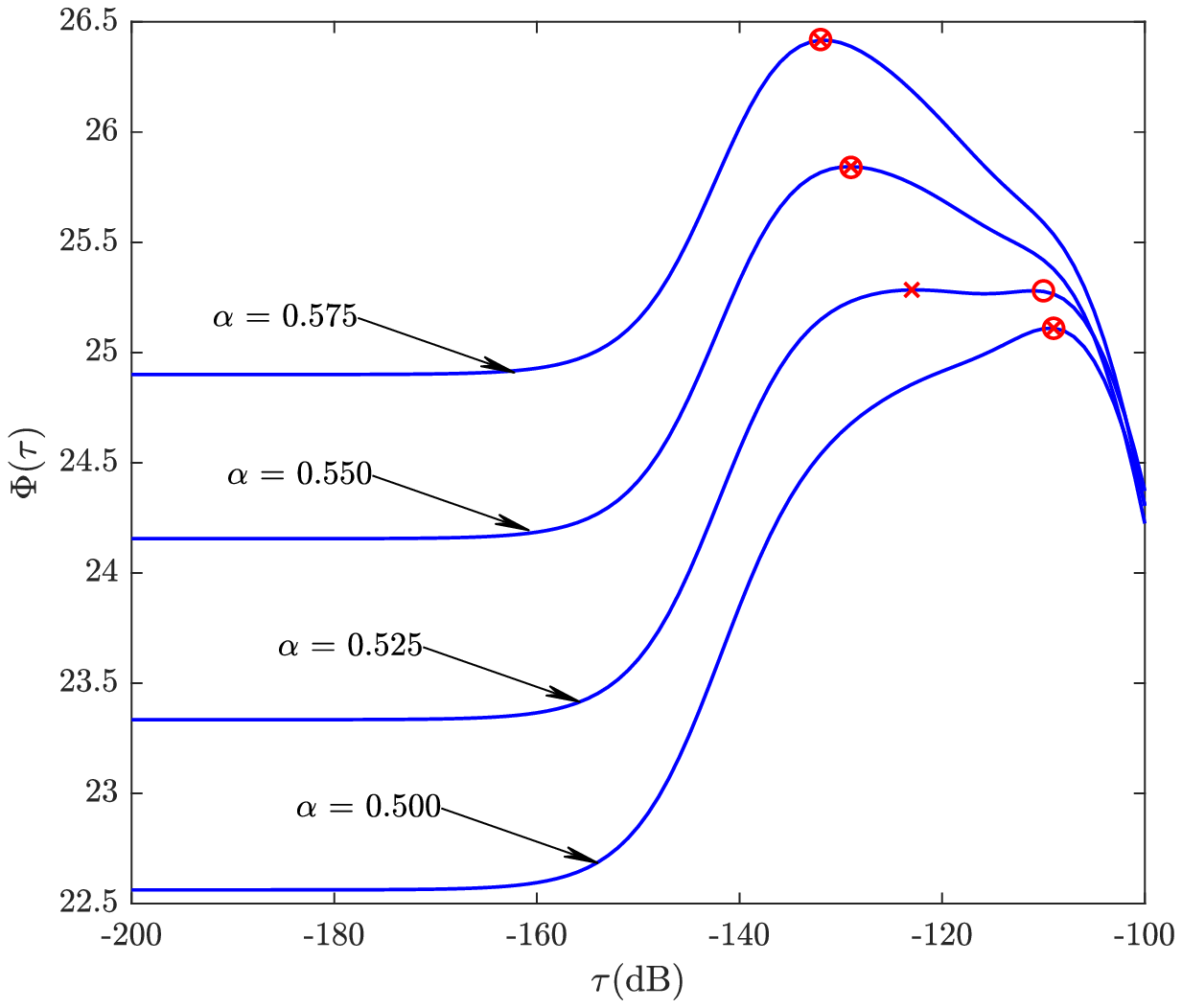}
\label{33M4}
\end{minipage}%
}
\subfigure[$M = 8$.]{
\begin{minipage}[t]{0.31\linewidth}
\centering
\includegraphics[width=2.4in]{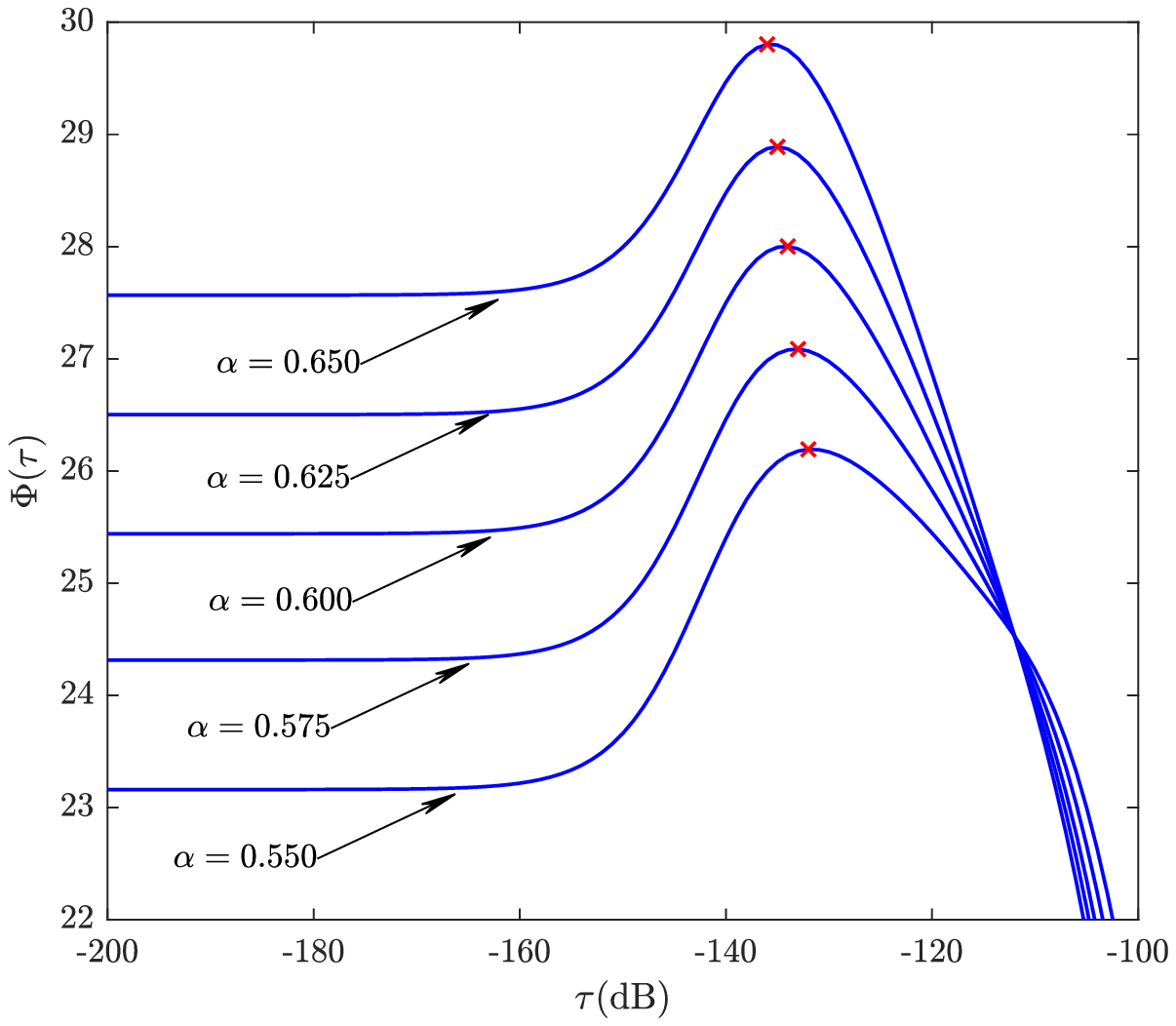}
\label{33M8}
\end{minipage}
}
\subfigure[$M\to \infty$.]{
\begin{minipage}[t]{0.31\linewidth}
\centering
\includegraphics[width=2.4in]{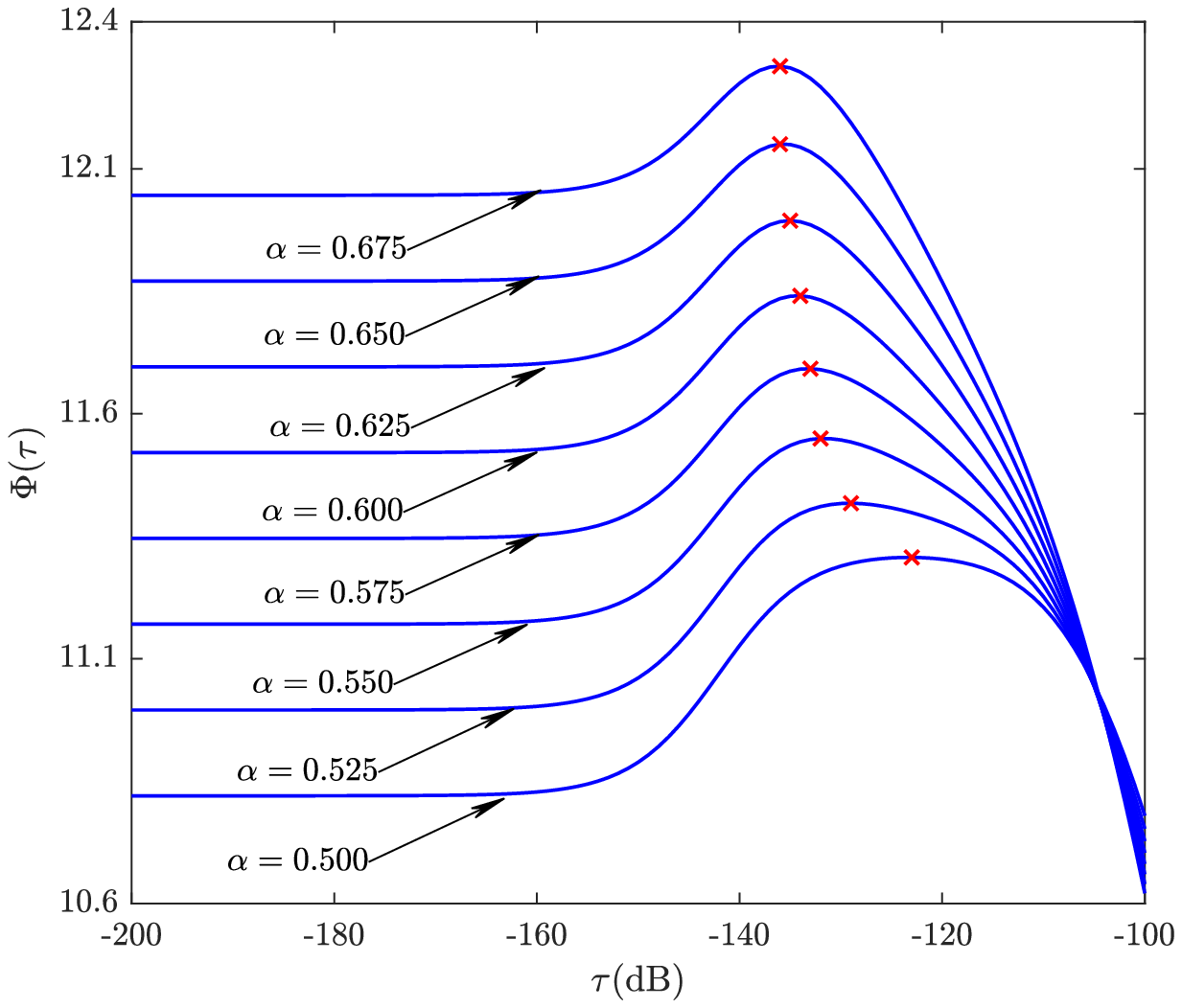}
\label{33Mlim}
\end{minipage}
}
\caption{Free entropy as a function of MSE with $P_t = 33$dBm transmit power under different settings of $\alpha$ and $M$.}
\end{center}
\end{figure*}
Define $\tau^{\star}$ as the largest stationary point of (\ref{FE_tau}) that associated with the fixed point of the AMP state evolution. Together with (\ref{llr}), the LLR in the isotropic channel case is
\begin{align}
&\text{LLR}(\hat{\boldsymbol s_g}) = M\log\frac{\sigma_w^2+\frac{1}{\alpha}}{\sigma_g^2+\sigma_w^2+\frac{1}{\alpha}\tau^{\star}}\nonumber\\
&+\hat{\boldsymbol s_g}^H\left((\sigma_w^2+\frac{1}{\alpha}\tau^{\star})^{-1}-(\sigma_g^2+\sigma_w^2+\frac{1}{\alpha}\tau^{\star})^{-1}\right)\hat{\boldsymbol s_g}.
\end{align}
Obviously, the corresponding detection sufficient statistic is $||\hat{\boldsymbol s_g}||^2$. As a consequence, we have
\begin{align}
P_D(M) &= \gamma\left(M, \left(\sigma_g^2+\sigma_w^2+\frac{1}{\alpha\tau^{\star}}\right)^{-1}l_g'\right)\Gamma^{-1}(M),\label{pm1}\\
P_{F}(M) &= \gamma\left(M, \left(\sigma_w^2+\frac{1}{\alpha\tau^{\star}}\right)^{-1}l_g'\right)\Gamma^{-1}(M).\label{pf1}
\end{align}
where $\gamma(M, \cdot)\Gamma(M)^{-1}$ is the CDF of the Chi-square distribution with $2M$ degrees of freedom, and
\begin{align}
l_g'\triangleq\sigma_g^{-2}&\left(\sigma_w^2+\frac{1}{\alpha\tau^{\star}}\right)\left(\sigma_g^2+\sigma_w^2+\frac{1}{\alpha\tau^{\star}}\right)\nonumber\\
&\left(l_g-M\log\frac{\sigma_w^2+\frac{1}{\alpha}\tau^{\star}}{\sigma_g^2+\sigma_w^2+\frac{1}{\alpha}\tau^{\star}}\right)\nonumber,
\end{align}
with arbitrary threshold $l_g$. Further, in the asymptotic massive MIMO regime, as $M$ go to infinity, we have $\lim\nolimits_{M\to \infty} P_{F}(M) = P_{M}(M) = 0$.

As for the CE performance, according to (\ref{chan_est}) and $\mathcal E = \tau G^{-1}\mathbf I$, after performing the AMP algorithm, the CE error for an active user in the group $g$ is
\begin{equation}
\mathbb E\left[\left(\boldsymbol s_g-\hat{\boldsymbol h}_g\right)\left(\boldsymbol s_g-\hat{\boldsymbol h}_g\right)^H\right]= \epsilon \mathbf I,
\end{equation}
where $\epsilon = \left(\left(\sigma_w^2+\frac{1}{\alpha}\tau^{\star}\right)^{-1}+\sigma_g^{-2}\right)^{-1}$.

Therefore, in the isotropic Rayleigh channel, perfect AUD and the Bayes-optimal channel estimation can be achieved as long as the number of antennas is large enough. We also note the above results validate and extend the analytical results in \cite{LiuMassive12018}. Besides, studying the free entropy function (\ref{FE_tau}) provides a phase transition diagram and an optimality analysis of the isotropic channel case with a finite number of BS antennas, as shown in the following section.

\subsection{Verification in the Isotropic Channel Scenario}
\label{Num_1}
In this section, we provide numerical examples to verify our results based on the replica method in the isotropic channel scenario. Specifically, we consider that each user accesses the channel with a probability $\rho = 0.1$ and we assume the served user devices have been divided into $5$ user groups and the distance $d_g$ between each user group is randomly distributed in the regime $[0.1, 1]$km. The path loss model of the wireless channel for each user group $g$ is given as ${\rm PL}_g = -128.1-36.7\log_{10}(d_g)$ in dB. Since, we assume no power adaption, we denote the transmit power for each user as $P_t$, and we have $\sigma_g^2 = P_t\times{\rm PL}_g$. We assume the bandwidth of the wireless channel are $1$MHz and the power spectral density of the AWGN at the BS is $-169$dBm/Hz. The
\subsubsection{Phase Transition of the Free Entropy Function}
First, we examine the phase transition of the free entropy function and demonstrating the optimal recovery MSE and the AMP-achievable MSE under our scenario, showing that for different settings of the system parameters, i.e., $P_t$ and $\alpha$, the MSE performance can be divided into some performance regions.

Specifically, we can notice that in the Fig. \ref{33M2}, there exists phase transitions and the MSE performance can be divided into several regions. In the first region, with $\alpha = 0.525$, the free entropy function has only one local maximum, indicating the optimal MSE performance, and the AMP-achievable MSE coincide. In the second region, with $\alpha = 0.550$, the second local maximum point with a lower function value than the first local maximum of the free entropy function appears, indicating the unachievable MSE point. In the third region, with $\alpha = 0.575$, the smaller local maximum point leads to a larger value of the free entropy function, which indicates the MMSE under such a choice of parameters. However, a larger local maximum point blocks the MSE performance of AMP, since it always converges to the local maximum point associated with the largest MSE. Hence, in this region, there exists a gap between the AMP-achievable MSE and the MMSE. Finally, in the region with $\alpha = 0.600\sim0.675$, one local maximum point disappears, and the AMP-achievable MSE and the MMSE coincide again. We can intuitively infer that there exists a hard threshold between the latter two regions, i.e, when the pilot length exceeds the threshold, the local maximum point associated with the largest MSE $\tau$ discards, and the AMP algorithm converges to the local maximum associated with a lower MSE, leading that there is a phase transition of the MSE performance of AMP.

We remark that in order to reduce the gap between the AMP-achievable MSE and the MMSE in the case that the AMP algorithm is sub-optimal, one can consider the idea of \emph{seeding matrix} introduced in \cite{2012Probabilistic} in the design of pilot matrix $\boldsymbol F$, which is beyond the scope of this manuscript.

In addition, Figs. \ref{18M2}-\ref{33M2} demonstrate the free entropy as a function of MSE with $M = 2$ BS antennas, and Figs. \ref{18M2}-\ref{33M2} demonstrate the free entropy function with $P_t = 33$dBm transmit power. We can notice that the region where AMP is suboptimal persists, but becomes smaller and eventually disappears as the number of BS antennas increases and transmit power decreases. Interestingly, we find that in the isotropic channel scenario, $M = 8$ antennas is enough to avoid phase transition in the case with the transmit power is lower than $P_t = 33$dBm.

\subsubsection{Performance Analysis of AUD}\label{Per_aud1}
\begin{figure}[ht]
\centering
\includegraphics[width=3.5in]{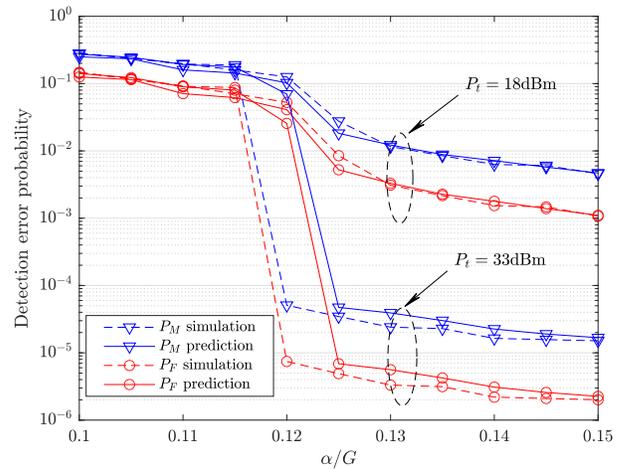}
\caption{Detection performance prediction of AMP with different settings of pilot length.}
\label{dtpre1}
\end{figure}

We then provide numerical results to analyze the missed detection and false alarm probabilities in the isotropic channel scenario. The number of user devices is set as $N = 10000$. Fig. \ref{dtpre1} demonstrates the prediction of AMP-based AUD by equation (\ref{pf1}) and (\ref{pm1}) versus pilot length, with $\tau^{\star}$ predicted by the free entropy function. The simulation curves are depicted by the empirical AMP algorithm \cite{ChenSparse2018, RepMMVG2018}. Different from the standard AMP algorithm, the empirical AMP algorithm adopts an empirical state evolution function in the iterations, substituting the standard state evolution function in the AMP algorithm, in order to avoid the complex expectation calculations. The numerical results shows that our predicted performances are consistent with that of the empirical AMP algorithm for most of the settings. The empirical AMP and the proposed theoretical prediction have the similar tendency and there exists phase transition phenomenons in both of them. We notice that when phase transition occurs, the required length of pilot sequences for our prediction is slightly shorter than that for the empirical algorithm. Since the prediction error is about $0.04$ over the length of pilot sequence, we think the theoretical results can provide a prediction for the detection performance in the practice.  We also note that our predicted performance corresponding to the AMP achievable MSE provides a bound of the performance of the empirical AMP algorithm. The performance loss of the empirical AMP algorithm is because that the state evolution function that fulfills the fixed point condition (\ref{SE}) of the free entropy function is substituted by an empirical one.

%



\setcounter{TempEqCnt}{\value{equation}}
\setcounter{equation}{20}
\begin{figure*}[tbp]
\begin{align}
&\Phi_g(\mathcal E_g) = -\alpha{\rm Tr}\left(\left(\boldsymbol\Delta+\frac{\mathcal E_g}{\alpha}\right)^{-1}\left(\frac{1}{\alpha}\rho\mathcal C_g+\boldsymbol\Delta\right)\right)-\alpha\log\left|\boldsymbol\Delta+\frac{\mathcal E_g}{\alpha}\right|+\int{\rm d}\boldsymbol s_g[(1-\rho)\delta(\boldsymbol s_g)+\rho Q_g(\boldsymbol s_g)]\int{\rm D}\boldsymbol z\nonumber\\
&\log\left(\int{\rm d}\boldsymbol x_g[(1-\rho)\delta(\boldsymbol x_g)+\rho Q_g(\boldsymbol x_g)]\right.\exp\left(-\boldsymbol x_g^H(\boldsymbol\Delta+\frac{\mathcal E_g}{\alpha})^{-1}\boldsymbol x_g+2\mathfrak R\left(\boldsymbol x_g^H[(\boldsymbol\Delta+\frac{\mathcal E_g}{\alpha})^{-1}\boldsymbol s_g+(\boldsymbol\Delta+\frac{\mathcal E_g}{\alpha})^{-\frac{1}{2}}\boldsymbol z]\right)\right).\label{d_free}
\end{align}
\begin{align}
{\Phi}_g(\boldsymbol\Xi_g) = &-\alpha\sum\limits_{m = 1}^{r_g}\left(\frac{\sigma_w^2}{\frac{1}{\alpha}\xi_{g,m}+\sigma_w^2}+\log\left(\frac{1}{\alpha}\xi_{g,m}+\sigma_w^2\right)\right)+\sum\limits_{m = 1}^{r_g}\frac{(1-\rho)\lambda_{g,m}}{\lambda_{g,m}+\sigma_w^2+\frac{1}{\alpha}\xi_{g,m}}\nonumber\\
&+\int{\rm D}\boldsymbol z\rho\log\left[(1-\rho)\prod_{m = 1}^{r_g}\exp\left\{-\frac{|z_m|^2\lambda_{g,m}}{\sigma_w^2+\frac{1}{\alpha}\xi_{g,m}}\right\}+\rho\prod_{m = 1}^{r_g}\frac{\frac{1}{\alpha}\xi_{g,m}+\sigma_w^2}{\frac{1}{\alpha}\xi_{g,m}+\sigma_w^2+\lambda_{g,m}}\right]\nonumber\\
&+\int{\rm D}\boldsymbol z(1-\rho)\log\left[(1-\rho)\prod_{m = 1}^{r_g}\exp\left\{-\frac{|z_m|^2\lambda_{g,m}}{\lambda_{g,m}+\sigma_w^2+\frac{1}{\alpha}\xi_{g,m}}\right\}+\rho\prod_{m = 1}^{r_g}\left(\frac{\frac{1}{\alpha}\xi_{g,m}+\sigma_w^2}{\frac{1}{\alpha}\xi_{g,m}+\sigma_w^2+\lambda_{g,m}^2}\right)\right].\label{fe_si}
\end{align}
\hrulefill
\end{figure*}
\setcounter{equation}{\value{TempEqCnt}}

\subsubsection{Performance Analysis of CE}

\begin{figure}[ht]
\centering
\includegraphics[width=3.5in]{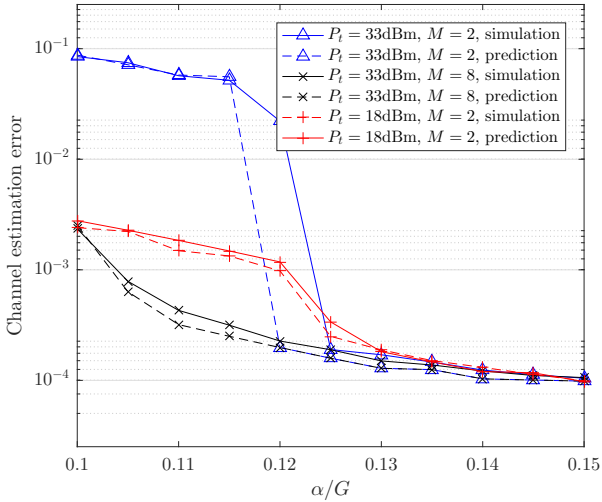}
\caption{Channel estimation performance prediction of AMP with different settings of pilot length.}
\label{cepre1}
\end{figure}

The results of the CE error prediction is demonstrated in Fig. \ref{cepre1}. We can observe that the CE error prediction performance is consistent compared with the detection error prediction. Our numerical results show that when the phase transition occurs, the performances of the joint AUD and CE are highly improved. The predicted performance corresponding to the AMP achievable MSE reveals the minimum length of pilot sequences to make the AMP algorithm accomplish the phase transition. We think this is a guide to how long pilot sequences are needed when designing the system. For the prediction error that may occur when using the empirical AMP algorithm in the practice, we can appropriately increase the number of pilots to ensure that the phase transition occurs.

\section{Spatially Correlated Channel}\label{spc}
In this section, we concern another typical channel scenario in the wireless communications literature, where the scattering is localized around the user devices and the BS is elevated and thus has no scatterers in its near filed \cite{MassiveMIMON2017}, resulting in the spatially correlated channel. Assuming no line-of-sight propagation, the channel of each user $k$ in the group $g$ is distributed as $Q_g(\mathbf h_{g}) = \mathcal{CN}(\boldsymbol 0, \mathcal C_g)$, where $\mathcal C_g = \boldsymbol U_g\boldsymbol\Lambda_g\boldsymbol U_g^H$ with a rank $r_g\ll M$. Considering the number of BS antennas is sufficiently large, the $\boldsymbol\Lambda_g$ is approximated diagonal \cite{CEMoG2015, CoCE2013}.

We further suppose that different user groups are sufficiently well separated in the angle of arrival (AoA) domain and the angular spread (AS) of each group is sufficiently small. Accordingly, we assume that channels coefficients in all the user groups are in different mutually orthogonal subspaces such that $\boldsymbol U_g^H\boldsymbol U_j = \boldsymbol 0$, for $j\neq g$. We note that although directly achieving the mutually orthogonal subspaces is too restrictive, some user scheduling strategies can be adopted to guarantee the user groups in mutually orthogonal subspaces are served simultaneously \cite{JSDM2013}. In the following of this section, we will see that the above spatially correlated channel assumption reduces the expression of the free entropy function (\ref{general_FE}) and provides some novel propositions for joint AUD and CE.

\subsection{Free Entropy Function}
Before deriving the free entropy function in the spatially correlated channel scenario, we recall that the matrix $\mathcal E$ is defined as $\mathcal EG= \sum_g \mathcal E_g$, where $\mathcal E_g $ is the corresponding recovery MSE matrix of each user group $g$. As a consequence, we have the following lemma.
\begin{lemma}
Under the spatially correlated channel assumption, i.e., $\boldsymbol U_g^H\boldsymbol U_j = \boldsymbol 0$, for $j\neq g$, all the stationary points of the free entropy function (\ref{general_FE}) fulfill the equation: $\mathcal E^{(\text{s})} =
\sum_g\boldsymbol U_g\boldsymbol\Xi_g^{(\text{s})}\boldsymbol U_g^H$ for some symmetric positive semidefinite matrix $\boldsymbol\Xi_g^{(\text{s})}$.
\end{lemma}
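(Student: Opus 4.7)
The plan is to argue directly from the stationary-point characterization in equation (\ref{SE}), which equates being a zero of $\partial\Phi/\partial\mathcal E$ with being a fixed point of the map
\[
\mathcal E \mapsto \frac{1}{G}\sum_{g}\mathbb{E}\!\left[\bigl(\eta_g(\boldsymbol s_g+\boldsymbol\Sigma^{1/2}\boldsymbol z)-\boldsymbol s_g\bigr)\bigl(\eta_g(\boldsymbol s_g+\boldsymbol\Sigma^{1/2}\boldsymbol z)-\boldsymbol s_g\bigr)^{H}\right].
\]
The goal is to show that the right-hand side, evaluated at any PSD $\mathcal E$, already lands in the claimed subspace of matrices, so in particular every fixed point must do so as well. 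In this way I never have to argue about uniqueness of stationary points; I just have to show that the image of the iteration map has the desired block form.

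The core observation I would use is a support-preservation property of the MMSE denoiser. Since $\mathcal C_g=\boldsymbol U_g\boldsymbol\Lambda_g\boldsymbol U_g^H$ has rank $r_g$ with column space $\mathrm{range}(\boldsymbol U_g)$, every draw from $Q_g(\boldsymbol s_g)=\mathcal{CN}(\boldsymbol 0,\mathcal C_g)$ lies almost surely in $\mathrm{range}(\boldsymbol U_g)$, and the atom at zero trivially does too. Writing $\boldsymbol P_g\triangleq\boldsymbol U_g\boldsymbol U_g^H$ and using $\boldsymbol s_g=\boldsymbol P_g\boldsymbol s_g$ a.s., linearity of conditional expectation gives
\[
\eta_g(\hat{\boldsymbol s}_g)=\mathbb{E}[\boldsymbol s_g\mid \hat{\boldsymbol s}_g]=\boldsymbol P_g\,\mathbb{E}[\boldsymbol s_g\mid \hat{\boldsymbol s}_g]=\boldsymbol P_g\,\eta_g(\hat{\boldsymbol s}_g)\quad\text{a.s.},
\]
regardless of the noise covariance $\boldsymbol\Sigma=\boldsymbol\Delta+\mathcal EG/\alpha$. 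Hence $\eta_g(\hat{\boldsymbol s}_g)-\boldsymbol s_g\in\mathrm{range}(\boldsymbol U_g)$ a.s.

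Consequently, for each $g$, the rank-one random matrix $(\eta_g-\boldsymbol s_g)(\eta_g-\boldsymbol s_g)^H$ factors pathwise as $\boldsymbol U_g\,\tilde{\boldsymbol B}_g(\boldsymbol s_g,\boldsymbol z)\,\boldsymbol U_g^H$, where $\tilde{\boldsymbol B}_g=\bigl(\boldsymbol U_g^H(\eta_g-\boldsymbol s_g)\bigr)\bigl(\boldsymbol U_g^H(\eta_g-\boldsymbol s_g)\bigr)^H$ is PSD in $\mathbb{C}^{r_g\times r_g}$. Taking the expectation pulls the $\boldsymbol U_g$ factors out, leaving $\mathbb{E}[\tilde{\boldsymbol B}_g]\succeq\boldsymbol 0$. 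Summing over $g$ and identifying $\boldsymbol\Xi_g^{(\mathrm s)}\triangleq\mathbb{E}[\tilde{\boldsymbol B}_g]/G$ yields $\mathcal E^{(\mathrm s)}=\sum_g \boldsymbol U_g\boldsymbol\Xi_g^{(\mathrm s)}\boldsymbol U_g^H$ with each $\boldsymbol\Xi_g^{(\mathrm s)}$ symmetric and PSD. This is exactly the claim.

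The only genuinely subtle point, and where I would spend the most care, is the step asserting $\eta_g(\hat{\boldsymbol s}_g)\in\mathrm{range}(\boldsymbol U_g)$: it relies on the fact that the prior places all its mass on a linear subspace and on conditional expectations preserving such subspaces, which in turn uses that $\boldsymbol P_g$ is a deterministic (measurable) linear projector. I would also remark that the inter-group orthogonality assumption $\boldsymbol U_g^H\boldsymbol U_j=\boldsymbol 0$ is not actually needed for the structural form itself, since each block $\boldsymbol U_g\boldsymbol\Xi_g^{(\mathrm s)}\boldsymbol U_g^H$ is produced independently from its own group; orthogonality only becomes essential downstream, when one wants the decomposition to be unique and the free entropy to split into per-group contributions in the subsequent propositions. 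Everything else is routine manipulation of Gaussian conditional means.
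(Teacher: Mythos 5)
Your proof is correct, and it reaches the lemma by a genuinely lighter route than the paper. Both arguments start from the same stationarity characterization, equation (\ref{SE}): a stationary point of the free entropy is a fixed point of the per-group MMSE map with effective noise covariance $\boldsymbol\Sigma=\boldsymbol\Delta+\mathcal EG/\alpha$. The paper then computes the posterior $p(\boldsymbol s_g\mid\hat{\boldsymbol s}_g)$ explicitly --- a spike-and-slab mixture with responsibility $\hat\rho$ --- applies the matrix inversion lemma to $(\mathcal C_g^{-1}+\boldsymbol\Sigma^{-1})^{-1}$, and reads off an explicit expression for $\boldsymbol\Xi_g^{(\mathrm s)}$ in which every term carries $\boldsymbol\Lambda_g\boldsymbol U_g^H$ on the left and $\boldsymbol U_g\boldsymbol\Lambda_g$ on the right. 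You instead invoke only support preservation of the conditional mean: the prior of group $g$ is carried by $\mathrm{range}(\boldsymbol U_g)$, hence $\eta_g(\hat{\boldsymbol s}_g)=\boldsymbol U_g\boldsymbol U_g^H\eta_g(\hat{\boldsymbol s}_g)$ almost surely, the error vector lies in $\mathrm{range}(\boldsymbol U_g)$ pathwise, and the expected outer product is therefore $\boldsymbol U_g$-sandwiched and PSD (your identification $\tilde{\boldsymbol B}_g=\bigl(\boldsymbol U_g^H(\eta_g-\boldsymbol s_g)\bigr)\bigl(\boldsymbol U_g^H(\eta_g-\boldsymbol s_g)\bigr)^H$ is valid because the columns of $\boldsymbol U_g$ are orthonormal). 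This avoids all Gaussian algebra, works for any prior supported on the subspace rather than only the Bernoulli--Gaussian one, and, as you observe, does not even use the inter-group orthogonality $\boldsymbol U_g^H\boldsymbol U_j=\boldsymbol 0$; the paper's own Lemma~1 proof likewise does not need it, orthogonality entering only later (Theorem~3) to decouple $\boldsymbol\Sigma^{-1}$ across groups. What your argument gives up is the explicit formula for $\boldsymbol\Xi_g^{(\mathrm s)}$ that the paper's computation produces; since the downstream development uses only the structural form (the further restriction of $\boldsymbol\Xi_g^{\star}$ to diagonal is argued separately via the state-evolution initialization), this loss is immaterial for the lemma.
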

\begin{proof}
Please see the appendix \ref{L_1}.
\end{proof}
Since the statistical characteristics of our considered scenario are all reflected in the stationary points of the free entropy function, we restrict the form of the matrix $\mathcal E$ as $\mathcal E =
\sum_g\boldsymbol U_g\boldsymbol\Xi_g\boldsymbol U_g^H$. As a consequence, the following theorem holds.
\begin{theorem}
Under the spatially correlated channel assumption, the free entropy function (\ref{general_FE}) can be decoupled into a summation of $G$ independent free entropy functions, i.e., $\Phi(\mathcal E) = \sum_{g = 0}^G\Phi_g(\mathcal E_g)$, where $\Phi_0(\mathcal E_0)$ is a constant, and the specific expression of each $\Phi_g(\mathcal E_g)$ is formulated in the equation (\ref{d_free}).

\end{theorem}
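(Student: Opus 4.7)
The plan is to exploit the block-diagonal structure that the subspace-orthogonality assumption $\boldsymbol U_g^H\boldsymbol U_j=\boldsymbol 0$ ($j\neq g$) induces on every operator appearing in (\ref{general_FE}). Invoking Lemma 1 I may restrict attention to $\mathcal E=\sum_g\boldsymbol U_g\boldsymbol\Xi_g\boldsymbol U_g^H$, so both $\mathcal E G=\sum_g\mathcal E_g$ and $\sum_g\rho\mathcal C_g=\sum_g\boldsymbol U_g(\rho\boldsymbol\Lambda_g)\boldsymbol U_g^H$ live in the direct sum $\bigoplus_g\mathrm{range}(\boldsymbol U_g)$. Choosing any orthonormal $\boldsymbol U_0\in\mathbb C^{M\times(M-\sum_g r_g)}$ spanning the orthogonal complement, the matrix $\boldsymbol U\triangleq[\boldsymbol U_0,\boldsymbol U_1,\ldots,\boldsymbol U_G]$ is unitary. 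In this basis the key operator $\boldsymbol A\triangleq\boldsymbol\Delta+\mathcal E G/\alpha$ is block-diagonal, with the $0$-th block equal to $\sigma_w^2\boldsymbol I$ and the $g$-th block equal to $\sigma_w^2\boldsymbol I_{r_g}+\boldsymbol\Xi_g/\alpha$; the same structure descends to $\boldsymbol A^{-1}$ and $\boldsymbol A^{-1/2}$, and, crucially, also to $\boldsymbol\Delta+\mathcal E_g/\alpha$ restricted to $\mathrm{range}(\boldsymbol U_g)$.

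With this setup I would decompose (\ref{general_FE}) term by term. For the log-determinant, $\log|\boldsymbol A|$ collapses to $\sum_g\log|\sigma_w^2\boldsymbol I_{r_g}+\boldsymbol\Xi_g/\alpha|$ plus an $\mathcal E$-independent piece proportional to $(M-\sum_g r_g)\log\sigma_w^2$, matching $\sum_g\log|\boldsymbol\Delta+\mathcal E_g/\alpha|$ after absorbing a constant offset. For the trace, the matrix $\sum_g\rho\mathcal C_g-\mathcal E G$ is block-diagonal in $\boldsymbol U$ with vanishing $0$-th block, so ${\rm Tr}(\boldsymbol A^{-1}(\sum_g\rho\mathcal C_g-\mathcal E G))=\sum_g{\rm Tr}((\sigma_w^2\boldsymbol I+\boldsymbol\Xi_g/\alpha)^{-1}(\rho\boldsymbol\Lambda_g-\boldsymbol\Xi_g))$. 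Invoking the identity $(\sigma_w^2\boldsymbol I+\boldsymbol\Xi_g/\alpha)^{-1}(\sigma_w^2\boldsymbol I+\rho\boldsymbol\Lambda_g/\alpha)=\boldsymbol I_{r_g}+\alpha^{-1}(\sigma_w^2\boldsymbol I+\boldsymbol\Xi_g/\alpha)^{-1}(\rho\boldsymbol\Lambda_g-\boldsymbol\Xi_g)$ then rewrites each summand as the trace appearing in (\ref{d_free}) together with an $\alpha r_g$ constant, so that summed over $g$ the residual $-\alpha M$ in (\ref{general_FE}) is consistently absorbed into the constants.

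For the integral term I would argue each $g$-summand independently reduces to the one in $\Phi_g(\mathcal E_g)$. Since $Q_g$ is supported on $\mathrm{range}(\boldsymbol U_g)$, both $\boldsymbol s_g$ and $\boldsymbol x_g$ are either zero or lie in this subspace; therefore the quadratic forms $\boldsymbol x_g^H\boldsymbol A^{-1}\boldsymbol x_g$ and $\boldsymbol x_g^H\boldsymbol A^{-1}\boldsymbol s_g$ involve only the $g$-th block of $\boldsymbol A^{-1}$, which equals the $g$-th block of $(\boldsymbol\Delta+\mathcal E_g/\alpha)^{-1}$, and the two versions coincide on such vectors. The noise contribution $\boldsymbol x_g^H\boldsymbol A^{-1/2}\boldsymbol z$ depends on $\boldsymbol z$ only through $\boldsymbol z_g\triangleq\boldsymbol U_g^H\boldsymbol z\sim\mathcal{CN}(\boldsymbol 0,\boldsymbol I_{r_g})$, which is independent of the remaining orthogonal blocks of $\boldsymbol U^H\boldsymbol z$; those integrate to unity and drop out, leaving exactly the integral displayed in (\ref{d_free}).

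Putting these pieces together yields $\Phi(\mathcal E)=\Phi_0+\sum_{g=1}^{G}\Phi_g(\mathcal E_g)$, where $\Phi_0$ aggregates the $-\alpha M$ constant of (\ref{general_FE}) together with all the log-determinant and trace offsets that are independent of every $\mathcal E_g$. I expect the main obstacle to be precisely this bookkeeping: three different sources of $M$- and $r_g$-proportional constants (from the $-\alpha M$ term, the log-det rearrangement, and the trace identity above) must cancel consistently so that what remains per group has exactly the shape of (\ref{d_free}); all other steps are essentially a mechanical exploitation of the block-diagonal structure afforded by the mutually orthogonal subspaces.
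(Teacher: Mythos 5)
Your proposal is correct, and it reaches (\ref{d_free}) by a somewhat different route than the paper. The paper first carries out the inner Gaussian integral over $\boldsymbol x_g$ explicitly, rewriting (\ref{general_FE}) as (\ref{decou0}) in terms of a posterior covariance $\boldsymbol V_g$ and mean $\boldsymbol\mu_g$, and then applies the matrix inversion lemma in the chain of equalities (\ref{decou1}) to show that, under $\mathcal E=\sum_g\boldsymbol U_g\boldsymbol\Xi_g\boldsymbol U_g^H$ and $\boldsymbol U_g^H\boldsymbol U_j=\boldsymbol 0$, the pair $(\boldsymbol V_g,\boldsymbol\mu_g)$ depends on $\mathcal E$ only through $\mathcal E_g$, from which the decoupling $\Phi(\mathcal E)=\sum_{g=0}^G\Phi_g(\mathcal E_g)$ follows. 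You instead never evaluate the integral: you complete $\{\boldsymbol U_g\}$ to a unitary basis $[\boldsymbol U_0,\boldsymbol U_1,\dots,\boldsymbol U_G]$, observe that $\boldsymbol\Delta+\mathcal EG/\alpha$, its inverse and square root, and $\sum_g\rho\mathcal C_g-\mathcal EG$ are all block-diagonal in that basis, and split the trace, log-determinant and integral terms block by block, using the support of $Q_g$ and rotation invariance of ${\rm D}\boldsymbol z$ to reduce the $g$-th integrand to the one in (\ref{d_free}). The two arguments rest on the same linear-algebraic fact (your block-diagonalization is essentially the paper's Woodbury step in a different guise), but your version is more elementary and maps directly onto the target form (\ref{d_free}), which retains the unevaluated $\boldsymbol x_g$-integral, and it treats the trace and log-det terms explicitly; the price is the constant bookkeeping you flag, which is indeed benign since the offsets (your identity $(\sigma_w^2\boldsymbol I+\boldsymbol\Xi_g/\alpha)^{-1}(\sigma_w^2\boldsymbol I+\rho\boldsymbol\Lambda_g/\alpha)=\boldsymbol I_{r_g}+\alpha^{-1}(\sigma_w^2\boldsymbol I+\boldsymbol\Xi_g/\alpha)^{-1}(\rho\boldsymbol\Lambda_g-\boldsymbol\Xi_g)$ checks out, and the log-det mismatch is an $\mathcal E$-independent multiple of $\log\sigma_w^2$) are exactly what the theorem's constant $\Phi_0(\mathcal E_0)$ is there to absorb. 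The paper's route, by contrast, produces the intermediate representation (\ref{decou0}), which is reused in spirit in the proof of Lemma 1 and makes the dependence of the per-group posterior on $\mathcal E_g$ alone transparent.
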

\begin{proof}
Please see the appendix \ref{T_3}.
\end{proof}
Accordingly, the recovery MSE of each user group can be evaluated separately via the equation (\ref{d_free}), indicating that the transmitted signals from users outside one group will not affect the reconstruction within such a group. In the following, the performance analysis of joint AUD and CE in the spatially correlated channel scenario will be provided based on the \emph{Theorem 3}.

\begin{figure*}
\begin{center}
\subfigure[$\alpha = 0.11$.]{
\begin{minipage}[t]{0.31\linewidth}
\centering
\includegraphics[width=2.3in]{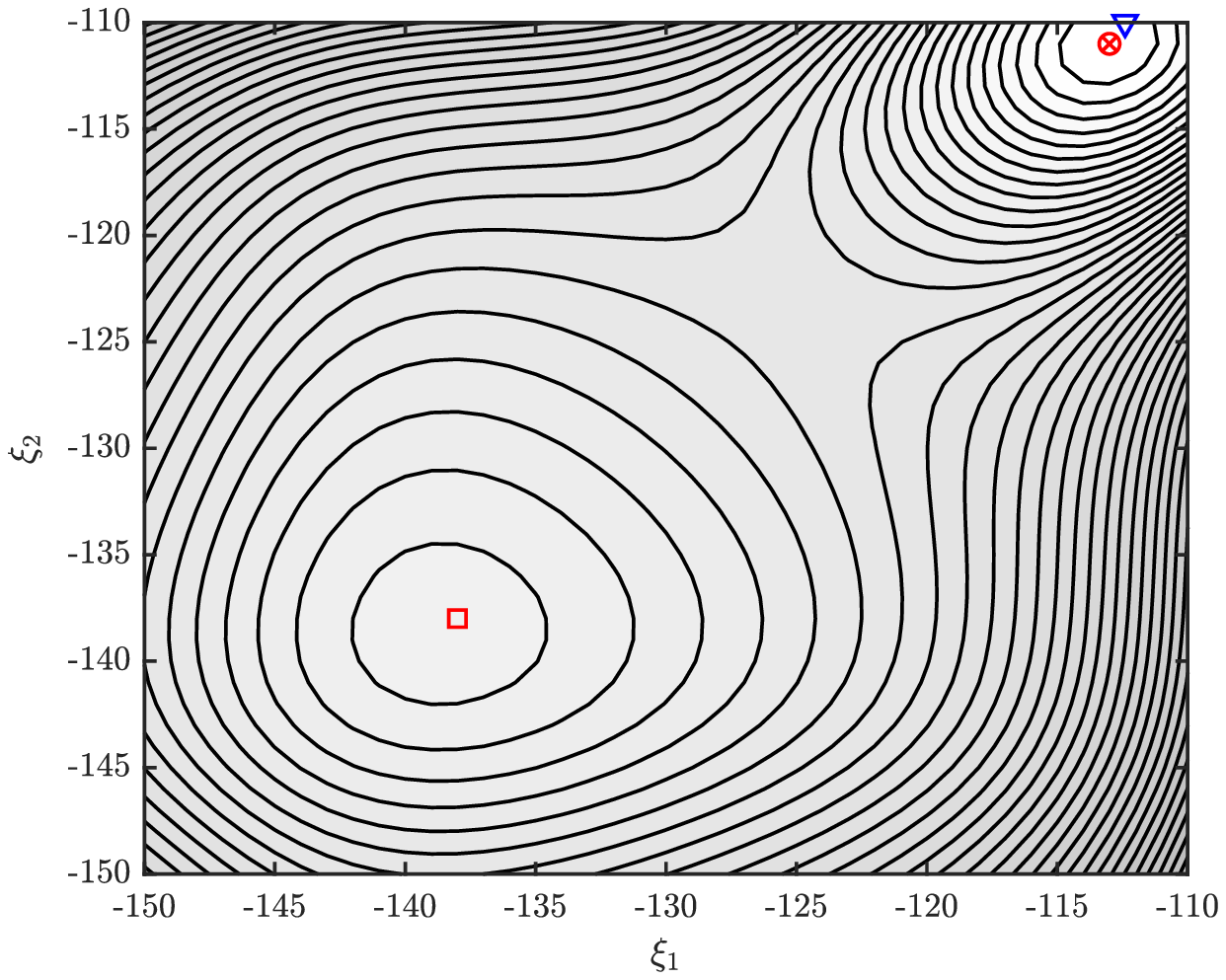}
\label{al011}
\end{minipage}%
}
\subfigure[$\alpha = 0.13$.]{
\begin{minipage}[t]{0.31\linewidth}
\centering
\includegraphics[width=2.3in]{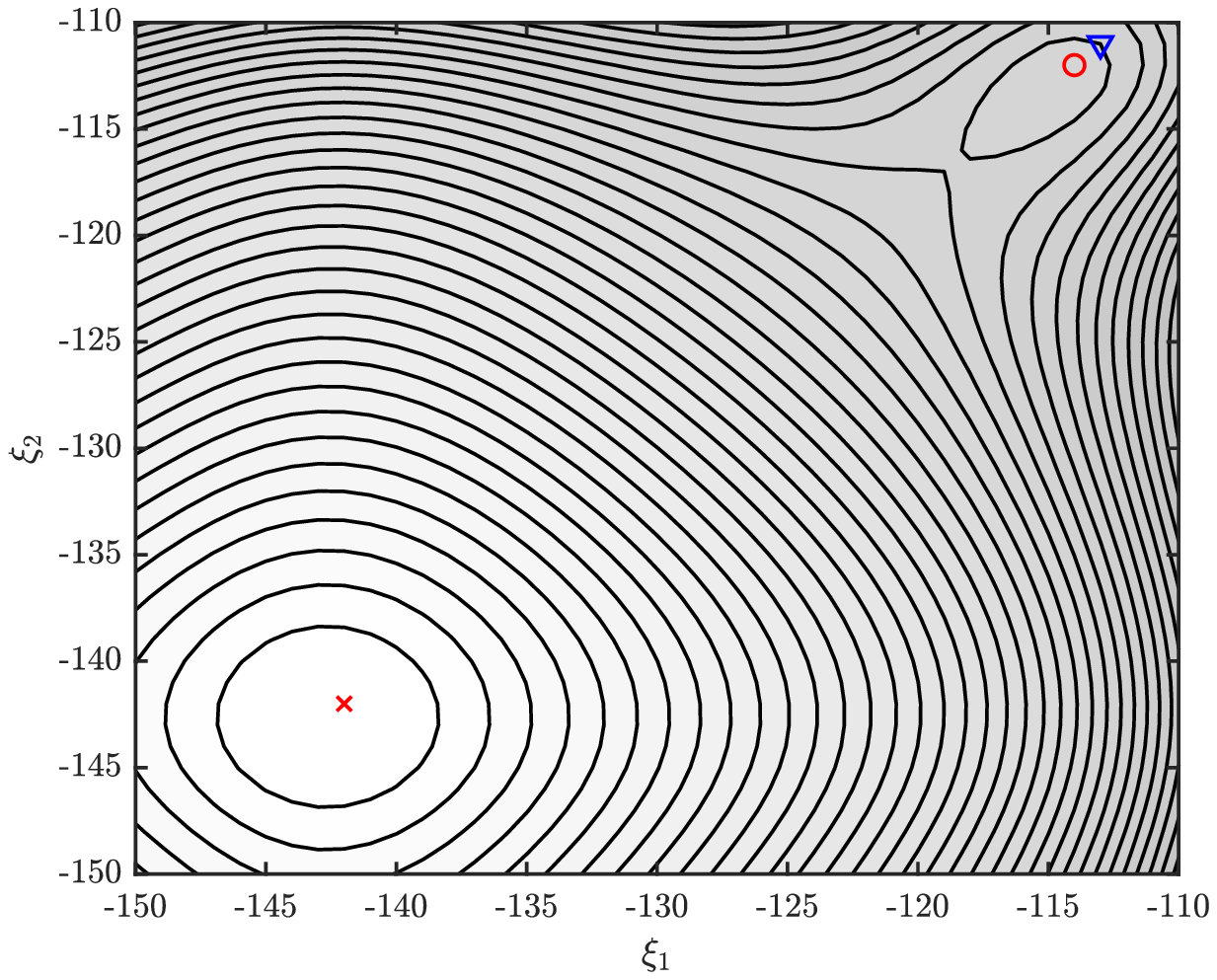}
\label{al013}
\end{minipage}
}
\subfigure[$\alpha = 0.15$.]{
\begin{minipage}[t]{0.31\linewidth}
\centering
\includegraphics[width=2.3in]{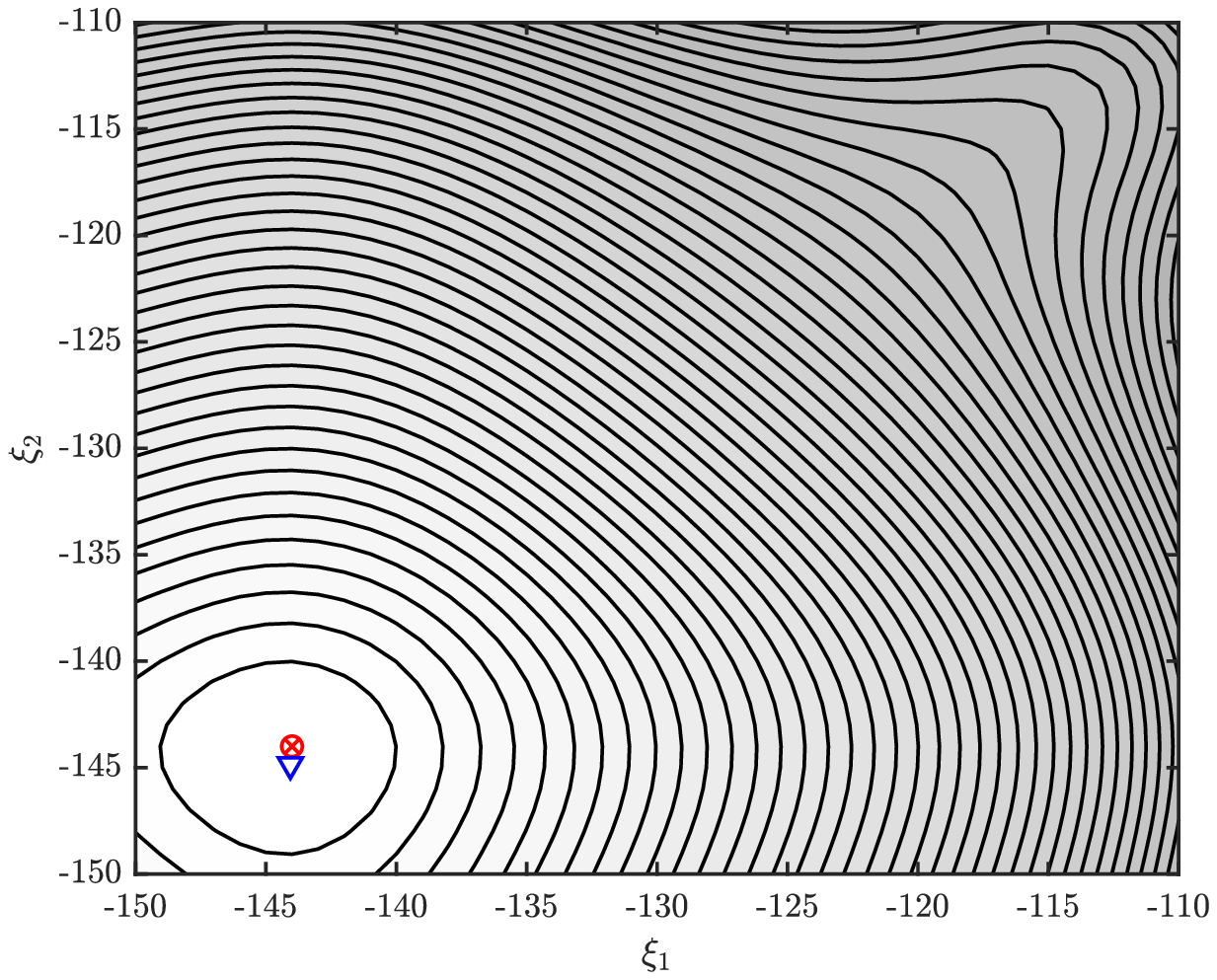}
\label{al015}
\end{minipage}
}
\caption{Free entropy as a function of MSE with $P_t = 18$dBm transmit power under different settings of $\alpha$ in spatially correlated channel scenario. (The ``red cross'' denotes the MMSE point, the ``red circle'' denotes the predicted AMP-achievable MSE point, the ``red square'' denotes the unachievable local maxima MSE point, and the ``blue triangle'' denotes the simulated AMP MSE point.)}
\end{center}
\end{figure*}


\subsection{Prediction of AMP-Based joint AUD and CE}
According to the \emph{Lemma 1}, we define $\boldsymbol\Xi_g^{\star}, \forall g$ as the matrixs that associated with the fixed point $\mathcal E^{\star}$ of AMP state evolution. We further restrict the structure of the matrix $\boldsymbol \Xi_g^{\star}$ to be diagonal\footnote{Restricting the matrix $\boldsymbol\Xi_g^{\star}$ into diagonal is reasonable. As proved in \cite{RepMMVG2018}, the state of state evolution function always maintains the same structure with iterations.  The initial state of the state evolution in the AMP framework is always set to be $\mathcal E_g^{(0)} = \mathbb E(\boldsymbol s_g\boldsymbol s_g^H)$, and in our situation, we have $\mathcal E_g^{(0)} = \rho \mathcal C_g = \rho\boldsymbol U_g\boldsymbol \Lambda_g\boldsymbol U_g^H$. Hence, $\boldsymbol\Xi_g^{\star}$ will also be diagonal since the matrix $\boldsymbol\Lambda_g$ is approximately diagonal in our scenario with a large number of antennas.}, with the $m$th diagonal element denoted as $\xi_{g,m}^{\star}$. By combining \emph{Theorem 3} with \emph{Lemma 1}, the matrix $\boldsymbol\Xi_g^{\star}$ can be obtained by seeking the local maximum point of the equation (\ref{fe_si}), where the quantities $\xi_{g,m}$, $\lambda_{g,m}$ are defined as the $m$th element of the diagonal of $\boldsymbol\Xi_g$ and $\boldsymbol\Lambda_g$.

Then, we consider the performance analysis of joint AUD and CE. For AUD, defining $\underline{\boldsymbol s_g}\triangleq \boldsymbol U_g^H\hat{\boldsymbol s}_g$ and together with (\ref{llr}), the detection sufficient statistic in such a scenario can be formulated as
\setcounter{equation}{22}
\begin{equation}
\text{T}(\underline{\boldsymbol s_g}) = \sum\limits_{m = 1}^{r_g}\frac{\lambda_{g,m}|\underline{s_{g,m}}|^2}{\left(\sigma_w^2+\frac{1}{\alpha}\xi_{g,m}^{\star}\right)\left(\sigma_w^2+\frac{1}{\alpha}\xi_{g,m}^{\star}+\lambda_{g,m}\right)}.\label{det2}
\end{equation}
and the involved conditional probabilities of $\underline{\boldsymbol s_g}$ are given as
\begin{align}
p(\underline{\boldsymbol s_g}|a_g = 1) &= \mathcal{CN}(\underline{\boldsymbol s_g}; \boldsymbol 0, \boldsymbol\Lambda_g+\boldsymbol\Delta+\frac{1}{\alpha}\boldsymbol\Xi_g^{\star}),\nonumber\\
p(\underline{\boldsymbol s_g}|a_g = 0) &= \mathcal{CN}(\underline{\boldsymbol s_g}; \boldsymbol 0, \boldsymbol\Delta+\frac{1}{\alpha}\boldsymbol\Xi_g^{\star}).\label{CP}
\end{align}

As a result, we can then derive the specific functional forms of the detection performance metrics in the following proposition.
\begin{proposition}
The missed detection probability with the definition $P_D = {\rm Pr}\{\text{T}(\underline{\boldsymbol s_g})>l_g|a_g = 1\}$ and the false alarm probability with the definition $P_F = {\rm Pr}\{\text{T}(\underline{\boldsymbol s_g})>l_g|a_g = 0\}$ based on the conditional probabilities (\ref{CP}) and the detection sufficient statistic (\ref{det2}) are given as
\begin{align}
P_D = \sum\limits_{m = 1}^{r_g}\prod_{j\neq m}\frac{\omega_{g,j}}{\omega_{g,j}-\omega_{g,m}}\exp\{-\omega_{g,m}l_g\},\nonumber\\
P_F = \sum\limits_{m = 1}^{r_g}\prod_{j\neq m}\frac{\tilde{\omega}_{g,j}}{{\tilde\omega}_{g,j}-{\tilde\omega}_{g,m}}\exp\{-{\tilde\omega}_{g,m}l_g\},\label{pro2}
\end{align}
where $$\omega_{g,m}\triangleq (\sigma_w^2+\frac{1}{\alpha}\xi_{g,m}^{\star})\lambda_{g,m}^{-1},$$ and $$\tilde{\omega}_{g,m}\triangleq (\sigma_w^2+\frac{1}{\alpha}\xi_{g,m}^{\star}+\lambda_{g,m})\lambda_{g,m}^{-1}.$$
\end{proposition}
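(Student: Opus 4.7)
The plan is to recognize that Proposition~2 reduces to computing the tail distribution of a weighted sum of independent exponential random variables, one under each hypothesis. The key observation is that all three covariance matrices involved, namely $\boldsymbol\Lambda_g$, $\boldsymbol\Delta=\sigma_w^2\boldsymbol I$, and the restricted fixed point $\boldsymbol\Xi_g^{\star}$, are diagonal, so under both $a_g=0$ and $a_g=1$ the conditional distributions in (\ref{CP}) make the entries $\underline{s_{g,m}}$ mutually independent complex Gaussians with zero mean. Consequently each $|\underline{s_{g,m}}|^2$ is exponentially distributed, with mean $\sigma_w^2+\tfrac{1}{\alpha}\xi_{g,m}^{\star}$ under $a_g=0$ and $\lambda_{g,m}+\sigma_w^2+\tfrac{1}{\alpha}\xi_{g,m}^{\star}$ under $a_g=1$.

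Next I would carry out a scaling step. Each summand in the statistic (\ref{det2}) has the form $c_m|\underline{s_{g,m}}|^2$ with $c_m=\lambda_{g,m}\big[(\sigma_w^2+\tfrac{1}{\alpha}\xi_{g,m}^{\star})(\sigma_w^2+\tfrac{1}{\alpha}\xi_{g,m}^{\star}+\lambda_{g,m})\big]^{-1}$. Using the fact that a scalar multiple of an $\mathrm{Exp}(\mu)$ variable by $c$ is $\mathrm{Exp}(\mu/c)$, a direct cancellation shows that, under $a_g=1$, the $m$th summand is exponential with rate exactly $\omega_{g,m}=(\sigma_w^2+\tfrac{1}{\alpha}\xi_{g,m}^{\star})/\lambda_{g,m}$, and under $a_g=0$ it is exponential with rate $\tilde\omega_{g,m}=(\sigma_w^2+\tfrac{1}{\alpha}\xi_{g,m}^{\star}+\lambda_{g,m})/\lambda_{g,m}$. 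Hence $T(\underline{\boldsymbol s_g})$ is a hypoexponential random variable whose rate vector depends on the hypothesis.

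Finally I would invoke the closed form of the hypoexponential survival function: if $X_1,\ldots,X_n$ are independent with $X_m\sim\mathrm{Exp}(\mu_m)$ and distinct rates, then $\Pr\{\sum_m X_m>l\}=\sum_{m=1}^{n}\bigl(\prod_{j\neq m}\mu_j/(\mu_j-\mu_m)\bigr)\exp(-\mu_m l)$. I would derive this on the fly by taking the Laplace transform of the density, factoring into $\prod_m \mu_m/(\mu_m+s)$, applying partial fraction decomposition with residues $\prod_{j\neq m}\mu_j/(\mu_j-\mu_m)$, and inverting term by term. Substituting $\mu_m\leftarrow\omega_{g,m}$ yields the expression for $P_D$ and $\mu_m\leftarrow\tilde\omega_{g,m}$ yields the expression for $P_F$ in (\ref{pro2}).

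The main technical obstacle is the partial-fraction step, which implicitly requires the rates $\{\omega_{g,m}\}_{m=1}^{r_g}$ (respectively $\{\tilde\omega_{g,m}\}$) to be distinct; if they are not, the formula must be obtained by a limiting argument (or using repeated-root residues), yielding the same expression almost everywhere in the parameters. Everything else is a routine chain of independence, change-of-variable, and Laplace-inversion calculations.
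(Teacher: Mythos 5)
Your proposal is correct and follows essentially the same route as the paper: both reduce $\text{T}(\underline{\boldsymbol s_g})$ to a sum of independent exponentials with rates $\omega_{g,m}$ under $a_g=1$ and $\tilde\omega_{g,m}$ under $a_g=0$ (exploiting the diagonal covariances), assume distinct rates, and apply the hypoexponential closed form to get the tail probabilities. The only difference is cosmetic: the paper cites a reference for the sum-of-exponentials density and integrates it, whereas you rederive the survival function via Laplace transform and partial fractions.
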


\begin{proof}
Please see the appendix \ref{dp}.
\end{proof}
As for the analysis for CE, according to (\ref{cha_err_ma}), the CE error of the user group $g$ associated with the $m$th  eigenvalue $\lambda_{g,m}$ can be formulated as $(\sigma_w^2+\lambda_{g,m} +\frac{1}{\alpha}\xi_{g,m}^{\star})^{-1}$.

After that, we can see that the performance of joint AUD and CE can be predicted by the local maximum point of the free entropy function (\ref{fe_si}). The verification of such a result, a phase transition diagram and an optimality analysis of joint AUD and CE in the spatially correlated channel scenario will be shown in the following Sec. \ref{ve_sc}.

\subsection{Discussion}
Before proceeding, we provide some discussions about our theoretical results. Multiplying $\boldsymbol U_g^{*}$ for each group by the right of the received signal $\boldsymbol Y$ in the original model (\ref{channel}), we can obtain $G$ a sub-model expressed as
\begin{equation}
    \underline{\boldsymbol Y}_g = \underline{\boldsymbol F}_g\underline{\boldsymbol X}_g+\underline{\boldsymbol W}_g,\label{red_mo}
    \end{equation}
    where $\underline{\boldsymbol Y}_g\triangleq \boldsymbol Y\boldsymbol U_g^{*}$, $\underline{\boldsymbol X}_g\triangleq \boldsymbol X\boldsymbol U_g^{*}$, $\underline{\boldsymbol W}_g\triangleq \boldsymbol W\boldsymbol U_g^{*}$ and $\underline{\boldsymbol F}_g$ is the corresponding pilot matrix for the user device for the group $g$. We note that the free entropy function of (\ref{red_mo}) reflecting its statistical performances corresponds to (\ref{fe_si}), so that separately using model (\ref{red_mo}) for each group is equivalent to jointly processing all the groups with (\ref{channel}), if the mutually orthogonal subspaces condition holds. Such a per-group processing (PGP) idea has been successfully adopted in the case of downlink transmission \cite{JSDM2013} and our theoretical results provide the theoretical foundations for the PGP strategy in the uplink joint AUD and CE problem.

    Complementally, when the mutually orthogonal condition does not perfectly hold, by carefully designing the processing matrix for each group, the PGP strategy can also be applied by permitting the affordable inter-group inference. According to the approach in \cite{JSDM2013}, when the BS equipped with a uniform linear array, the inter-group inference can be absorbed into the additive noise matrix with independent elements. Hence, by slightly modifying the free entropy function (\ref{fe_si}), our framework can also be adopted to evaluate performance of joint AUD and CE, where the mutually orthogonal condition does not perfectly hold.
\subsection{Verification in the Spatially Correlated Channel Scenario}\label{ve_sc}
In this section, we provide numerical results to verify the analysis based on the replica method in spatially correlated channel scenario. We consider the channel coefficients of all user groups are in orthogonal subspaces. Without loss of generality, we investigate the performance analysis results for one certain user group. We consider the typical spatially correlated channel model in \cite{MassiveMIMON2017}. For visualization, we assume the number of BS antennas is $M = 64$. The path loss model of the wireless channel, the distribution of the distance between the BS station and the user devices, the bandwidth the power spectral density and the total number of user devices are considered same with that in the settings of the isotropic channel case.

\subsubsection{Phase Transition of the Free Entropy Function}

We first consider the center angle of user angle domain is set as $45^{\circ}$ distributed as a normal distribution with a $1^{\circ}$ AS, where most of the power of channel coefficients is concentrated on a two-dimensional subspace.
Figs. \ref{al011}-\ref{al015} shows the different MSE performance regions of the free entropy function (\ref{fe_si}) in different settings. We can notice that the performance regions in the spatially correlated channel case is similar with that in the isotropic channel case. Specifically, Fig. \ref{al013} demonstrates the performance gap between the AMP-achievable MSE and the MMSE. The phase transitions appear in both our prediction and the simulated AMP algorithm. Differently from the isotropic channel scenario, it is observed that although there exists a large number of antennas in the BS, the phase transition phenomenon still exists. Fig. 6 also shows that our analytical MSE results predicted in the large system limit and AMP recovery MSE in practical settings, i.e., $T = 220, 260, 300$ and $K = 2000$ are consistent. Such results are evidences that our analytical results are valid when the length of pilot sequences and the number of user devices are large but finite.

\begin{figure*}
\begin{center}
\subfigure[$P_t = 13$dBm.]{
\begin{minipage}[t]{0.31\linewidth}
\centering
\includegraphics[width=2.2in]{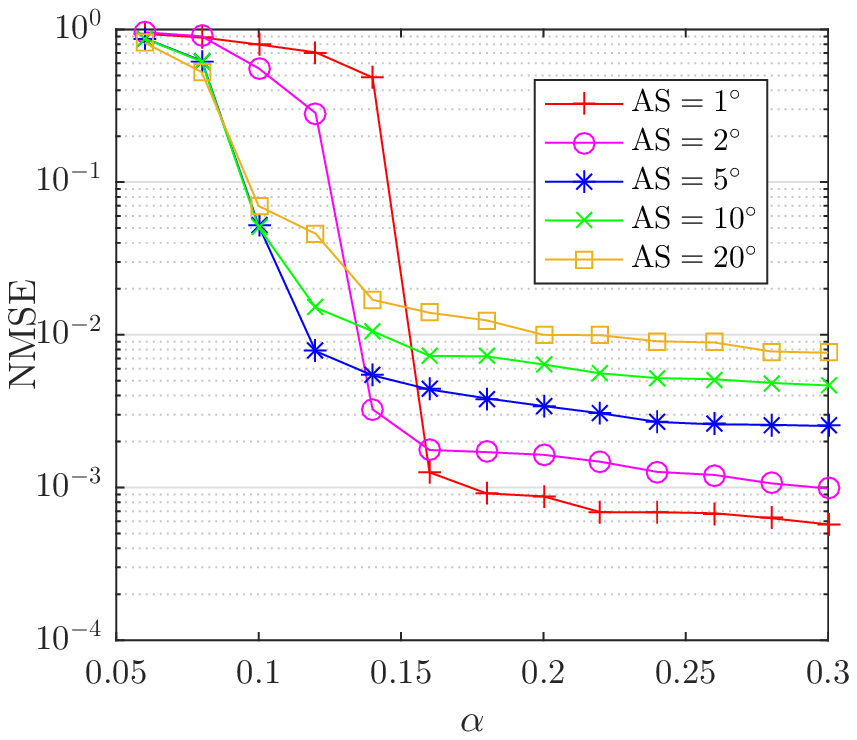}
\end{minipage}%
}
\subfigure[$P_t = 23$dBm.]{
\begin{minipage}[t]{0.31\linewidth}
\centering
\includegraphics[width=2.2in]{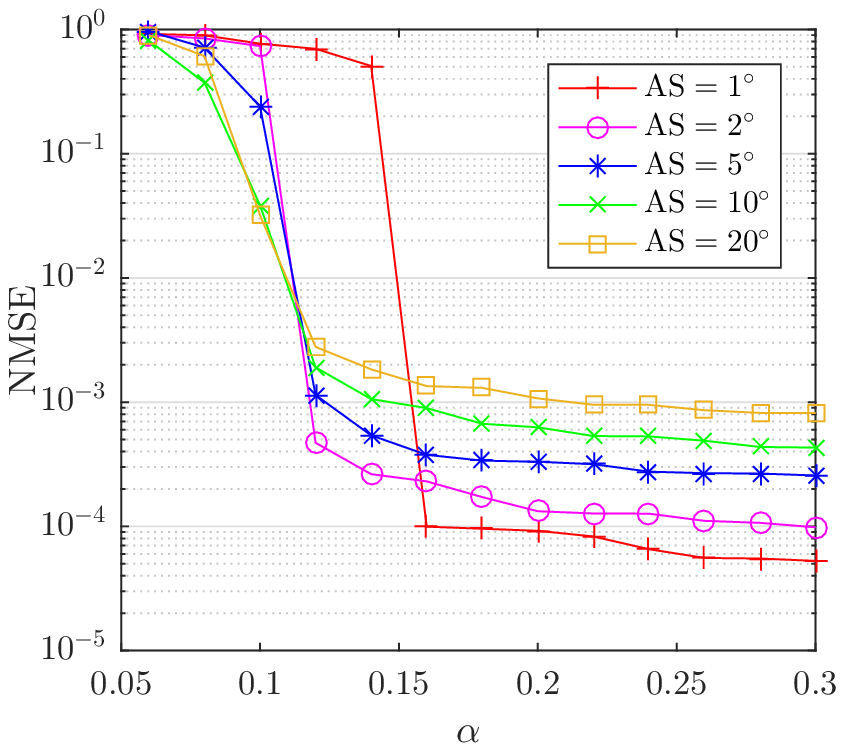}
\end{minipage}
}
\subfigure[$P_t = 33$dBm.]{
\begin{minipage}[t]{0.31\linewidth}
\centering
\includegraphics[width=2.2in]{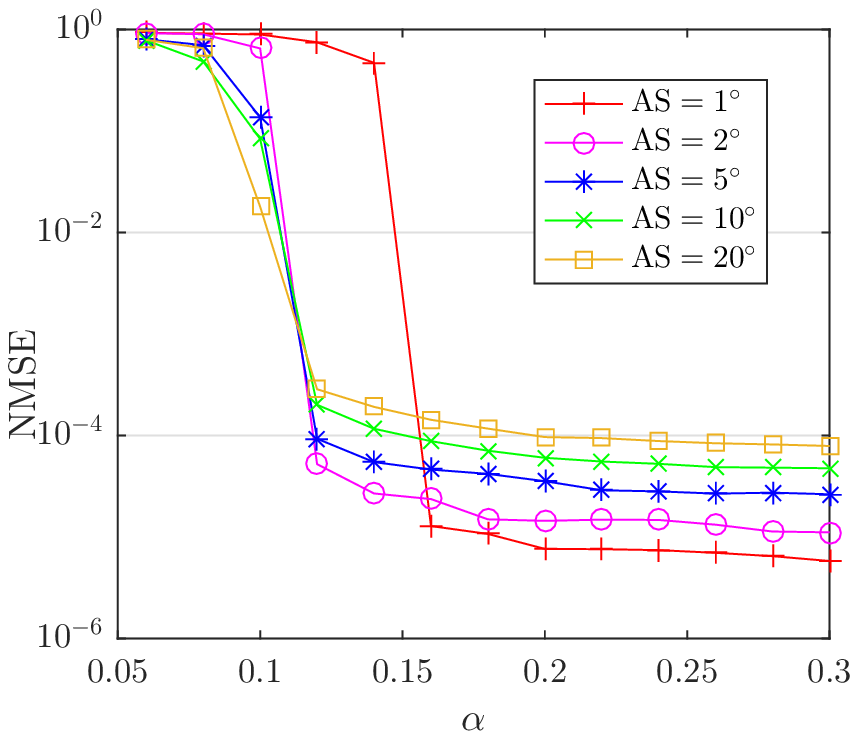}
\end{minipage}
}
\caption{NMSE performance of AMP versus pilot length in different settings of transmit power and AS.}
\label{pra}
\end{center}
\end{figure*}

Then, we demonstrate the case where the AS is not very small, which is more in line with the practical scenario. The phase transition phenomenons in all the considering settings can be observed in Fig. \ref{pra}. With the transmit power increasing, the phase transition phenomenon is becoming more and more obvious. In addition, with the AS increases, the number of the efficient dimensions in the subspace of the user channel becomes larger and the total energy will be allocated to these effective dimensions, leading that the phase transition phenomenon is weakened. Such an observation matches our previous results that the spatially correlated channel is more likely to promote a phase transition compared with the isotropic channel. Another observation is that smaller AS is beneficial to NMSE performance when pilot resources are abundant, while the pilot resources are scarce, the opposite is true.
\subsubsection{Performance Analysis of AUD}
\begin{figure}
\centering
\includegraphics[width=3.5in]{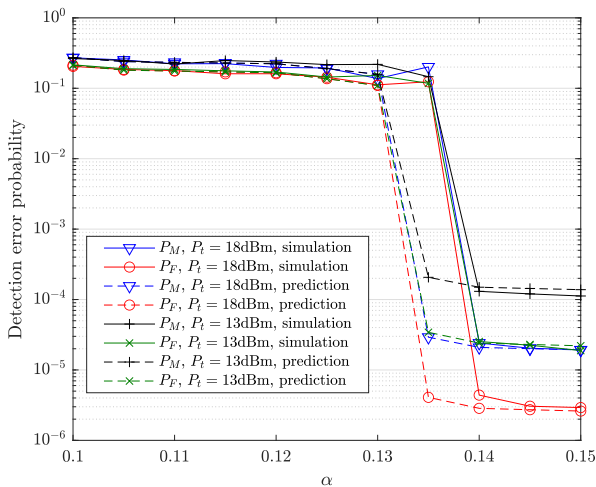}
\caption{Detection performance prediction.}
\label{detpre2}
\end{figure}
%

Fig. \ref{detpre2} depicts the prediction of AMP-based AUD as a function of pilot length with $1^{\circ}$ AS. It is observed that the phase transition of the detection error probabilities appears in both $P_t = 13$dBm and $P_t = 18$dBm settings, which is different from the results in the isotropic channel case. This is because the energy of a large number of antennas is concentrated in a small amount of subspace dimensions, increasing the signal-to-noise ratio in considered dimensions.


In summary, we note that the spatially correlated channel promotes the appearance of phase transition compared with the isotropic channel. This is because the energy of a large number of antennas is concentrated in a small amount of subspace dimensions. This reduces the effective dimensions of the BS antennas, while increasing the power on each dimension. Despite that, the pilot length required to step over the phase transition is highly reduced in the spatially correlated channel scenario, i.e., about $\alpha = 0.14$ for the spatially correlated channel case and about $\alpha = 0.625$ for the isotropic channel case with $5$ user groups.

\section{Conclusion}
In this paper, we have provided an analysis of joint AUD and CE problem under massive connectivity based on the replica method. Particularly, we have established a novel theoretical framework under a general channel model that reduces to multiple typical MIMO channel models. Based on the general framework, we have analyzed two typical scenarios in massive connectivity, i.e., the isotropic channel case and the spatially correlated channel case. We have provided the analysis of the Bayes-optimility, phase transition, and the predictions of the performance of joint AUD and CE in both the two cases. In addition, we have shown that the spatially correlated channel is more likely to promote a phase transition compared with the isotropic channel, due to the small number of subspace dimensions and corresponding antenna gains. However, thanks to the spatially correlated channel, where all the user group can be perfectly partitioned, the pilot length required to step over the phase transition can be significantly reduced.

Some future directions of research are also implied by this paper. 1) Our theoretical framework was established in the Bayes-optimal condition, if the condition is not met, what changes will exist in the analytical results. 2) Exploiting the potential user grouping information may bring some advantages, and how to design an algorithm to achieve joint user grouping, AUD and CE. 3) How to design the processing matrix in the PGP-based strategy for joint AUD and CE.


\appendix
\subsection{Proof of Theorem 1}
\label{A}

According to the expression of partition function $Z$ in  (\ref{patition}), we notice that $Z^{n}$ can be written as
\begin{align}
Z^{n} = \int &\prod_{a} {\rm d} \boldsymbol X^{(a)}\prod_{a,g,k}\left((1-\rho)\delta(\boldsymbol x_{gk}^{(a)})+\rho Q_g(\boldsymbol x_{gk}^{(a)})\right)\nonumber\\
&\times\prod_{t}\frac{1}{\pi^{nM}|\boldsymbol\Delta|^n}e^{-\sum\limits_{a}(\boldsymbol y_t-\boldsymbol f_t\boldsymbol X^{(a)})\boldsymbol\Delta^{-1}(\boldsymbol y_t-\boldsymbol f_t\boldsymbol X^{(a)})^{H}}\nonumber,
\end{align}

\setcounter{TempEqCnt}{\value{equation}}
\setcounter{equation}{29}
\begin{figure*}[tbp]
\begin{small}
\begin{align}
1& =  \int{\rm d}\boldsymbol\Theta\exp\left(\sum\limits_{g}{\rm Tr}\left(\sum\limits_{a}\hat{\mathcal Q}_g^{(a)H}\left(\frac{K}{2}\mathcal Q_g^{(a)}-\frac{1}{2}\sum\limits_{k}\mathbf x_{kg}^{(a)}\mathbf x_{kg}^{(a)^H}\right)+\sum\limits_{a}\hat{\mathcal Q}_g^{(a)T}\left(\frac{K}{2}\mathcal Q_g^{(a)*}-\frac{1}{2}\sum\limits_{k}\mathbf x_{kg}^{(a)*}\mathbf x_{kg}^{(a)^T}\right)\right.\right.\nonumber\\
&\left.-\sum\limits_{a\neq b}\hat{\mathcal T}_g^{(ab)H}\left(\frac{K}{2}\mathcal T_g^{(ab)}-\frac{1}{2}\sum\limits_{k}\mathbf x_{kg}^{(a)}\mathbf x_{kg}^{(b)^H}\right)\right.-\left.\sum\limits_{a\neq b}\hat{\mathcal T}_g^{(ab)T}\left(\frac{K}{2}\mathcal T_g^{(ab)*}-\frac{1}{2}\sum\limits_{k}\mathbf x_{kg}^{(a)*}\mathbf x_{kg}^{(b)^T}\right)\right.\nonumber\\
&\left.\left.-\sum\limits_{a}\hat{\mathcal M}_g^{(a)H}\left(K\mathcal M_g^{(a)}-\sum\limits_{k}\mathbf x_{kg}^{(a)}\mathbf s_{kg}^H\right)-\sum\limits_{a}\hat{\mathcal M}_g^{(a)T}\left(K\mathcal M_g^{(a)*}-\sum\limits_{k}\mathbf x_{kg}^{(a)*}\mathbf s_{kg}^T\right)\right)\right),\label{id}
\end{align}
\setcounter{equation}{31}
\begin{align}
&\mathbb EZ^{n} = \int{\rm d}\boldsymbol\Theta\exp\bigg(Kn\sum\limits_{g}{\rm Tr}\left(\frac{1}{2}\left(\hat{\mathcal Q}_g^{H}\mathcal Q_g+\hat{\mathcal Q}_g^{T}\mathcal Q_g^{*}\right)-\frac{1}{2}(n-1)\left(\hat{\mathcal T}_g^{H}\mathcal T_g+\hat{\mathcal T}_g^{T}\mathcal T_g^{*}\right)\right.-\left.\left(\hat{\mathcal M}_g^{H}\mathcal M_g+\hat{\mathcal M}_g^{T}\mathcal M_g^{*}\right)\right)\bigg)\nonumber\\
&\times\prod_{t}\frac{\beta_t}
{\pi^{Mn}|\boldsymbol\Delta|^n}\prod_{g}\bigg(\int{\rm d}\boldsymbol s_{g}((1-\rho)\delta(\boldsymbol s_{g})+\rho Q_g(\boldsymbol s_{g}))\int{\rm D}\boldsymbol z\bigg(\int\left({\rm d}\boldsymbol x_{g}((1-\rho)\delta(\boldsymbol x_{g})+\rho Q_g(\boldsymbol x_{g}))\right)\nonumber\\
&\times\exp\left(-\frac{1}{2}\boldsymbol x_g^{H}(\hat{\mathcal Q}_g^{H}+\hat{\mathcal Q}_g+\hat{\mathcal T}_g^{H}+\hat{\mathcal T}_g)\boldsymbol x_g\right.+\boldsymbol x_g^{H}\big(\big(\frac{\hat{\mathcal T}_g^{H}+\hat{\mathcal T}_g}{2}\big)^{\frac{1}{2}}\boldsymbol z+\hat{\mathcal M}_g\boldsymbol s_g\big)+\big(\boldsymbol z^H\big(\frac{\hat{\mathcal T}_g^{H}+\hat{\mathcal T}_g}{2}\big)^{\frac{1}{2}}+\boldsymbol s_g^{H}\hat{\mathcal M}_g^{H}\big)\boldsymbol x_g\bigg)\bigg)^n\bigg)^K.\label{ezhs}
\end{align}
\begin{align}
\beta_t = &\left|\mathbf I_{Mn}+\sigma_w^{-2}\mathcal G\right|^{-1}\nonumber\\
 = &\left|n\left(\boldsymbol\Delta+\frac{K}{T}\sum\limits_g\left(\mathcal Q_g-\mathcal T_g\right)\right)^{-1}\left(\frac{K}{T}\left(\sum\limits_g\rho\mathcal C_g-\mathcal M_g-\mathcal M_g^{H}+\mathcal T_g\right)+\boldsymbol\Delta\right)+\mathbf I_{M}\right|^{-1}\times\left|\mathbf I_M+\sigma_w^{-2}\frac{K}{T}\sum\limits_{g}(\mathcal Q_g-\mathcal T_g)\right|^{-n}\nonumber\\
\approx &\exp \bigg(-n{\rm Tr}\bigg(\bigg(\boldsymbol\Delta+\frac{K}{T}\sum\limits_g\bigg(\mathcal Q_g-\mathcal T_g\bigg)\bigg)^{-1}\bigg(\frac{K}{T}\bigg(\sum\limits_g\rho\mathcal C_g-\mathcal M_g-\mathcal M_g^{H}+\mathcal T_g\bigg)+\boldsymbol\Delta\bigg)\bigg)\nonumber\\
&\quad\quad\quad-n\log\left|\mathbf I_M+\sigma_w^{-2}\frac{K}{T}\sum\limits_{g}(\mathcal Q_g-\mathcal T_g)\right|\bigg)\label{bt}.
\end{align}
\setcounter{equation}{34}
\begin{align}
&\tilde{\Phi}(\boldsymbol{\mathcal Q}, \boldsymbol{\mathcal T}, \boldsymbol{\mathcal M}, \boldsymbol{\hat{\mathcal Q}}, \boldsymbol{\hat{\mathcal T}}, \boldsymbol{\hat{\mathcal M}}) =\sum\limits_{g}{\rm Tr}\left(\frac{1}{2}\left(\hat{\mathcal Q}_g^{H}\mathcal Q_g+\hat{\mathcal Q}_g^{T}\mathcal Q_g^{*}\right)+\frac{1}{2}\left(\hat{\mathcal T}_g^{H}\mathcal T_g+\hat{\mathcal T}_g^{T}\mathcal T_g^{*}\right)-\left(\hat{\mathcal M}_g^{H}\mathcal M_g+\hat{\mathcal M}_g^{T}\mathcal M_g^{*}\right)\right)\nonumber\\
&-\alpha{\rm Tr}\bigg(\bigg(\boldsymbol\Delta+\frac{1}{\alpha}\sum\limits_g\bigg(\mathcal Q_g-\mathcal T_g\bigg)\bigg)^{-1}\bigg(\frac{1}{\alpha}\bigg(\sum\limits_g\rho\mathcal C_g-\mathcal M_g-\mathcal M_g^{H}+\mathcal T_g\bigg)+\boldsymbol\Delta\bigg)\bigg)-\alpha\log\left|\boldsymbol\Delta+\frac{1}{\alpha}\sum\limits_{g}(\mathcal Q_g-\mathcal T_g)\right|\nonumber\\
&+\sum\limits_{g}\bigg(\int{\rm d}\boldsymbol s_{g}((1-\rho)\delta(\boldsymbol s_{g})+\rho Q_g(\boldsymbol s_{g}))\int{\rm D}\boldsymbol z\log\bigg(\int\left({\rm d}\boldsymbol x_{g}((1-\rho)\delta(\boldsymbol x_{g})+\rho Q_g(\boldsymbol x_{g}))\right)\exp\left(-\frac{1}{2}\boldsymbol x_g^{H}(\hat{\mathcal Q}_g^{H}+\hat{\mathcal Q}_g\right.\nonumber\\
&+\hat{\mathcal T}_g^{H}+\hat{\mathcal T}_g)\boldsymbol x_g
+\boldsymbol x_g^{H}\big(\big(\frac{\hat{\mathcal T}_g^{H}+\hat{\mathcal T}_g}{2}\big)^{\frac{1}{2}}\boldsymbol z+\hat{\mathcal M}_g\boldsymbol s_g\big)+\big(\boldsymbol z^H\big(\frac{\hat{\mathcal T}_g^{H}+\hat{\mathcal T}_g}{2}\big)^{\frac{1}{2}}+\boldsymbol s_g^{H}\hat{\mathcal M}_g^{H}\big)\boldsymbol x_g\bigg)\bigg)\bigg)-\alpha\log\pi^M.\label{phi0}
\end{align}
\end{small}
\hrulefill
\end{figure*}
\setcounter{equation}{\value{TempEqCnt}}

As a consequence, by considering the equation (\ref{channel}), the average free entropy can be derived as
\begin{align}
&\mathbb EZ^{n} = \int \prod_{a} {\rm d} \boldsymbol X^{(a)}\prod_{a,g,k}\left((1-\rho)\delta(\boldsymbol x_{gk}^{(a)})+\rho Q_g(\boldsymbol x_{gk}^{(a)})\right)\nonumber\\
&\times\int\prod_{g,k}{\rm d}\boldsymbol s_{gk}\left((1-\rho)\delta(\boldsymbol s_{gk})+\rho Q_g(\boldsymbol s_{gk})\right)\prod_{t}\frac{1}{\pi^{nM}|\boldsymbol\Delta|^n}\beta_t.\label{Ezn}%
\end{align}
The expectation is taken over $\boldsymbol F$, $\boldsymbol S$, and $\boldsymbol W$, and the quantity $\beta_t$ is defined as
\begin{align}
\beta_t \triangleq \mathbb E_{\boldsymbol F, \boldsymbol W}\big(\exp(-\sigma_w^{-2}\sum\limits_{a}||\boldsymbol v_t^{(a)}||^2)\big),\nonumber
\end{align}
where we define $$\boldsymbol v_t^{(a)}\triangleq \sum_{g,k}f_{tgk}\boldsymbol s_{gk}-\sum_{g,k}f_{tgk}\boldsymbol x_{gk}^{(a)}+\boldsymbol w_t, a = 1,\dots,n.$$
We note that $f_{tgk}$ is the corresponding element in the vector $\boldsymbol f_t$, and the vector $\boldsymbol w_t$ is the corresponding noise vector. Since we consider a Gaussian random pilot matrix with zero mean, $\boldsymbol v_t^{(a)}$ is still a Gaussian vector with zero mean.
The covariance matrices are calculated as follow.
\begin{small}
\[
\begin{split}
\mathbb E_{\boldsymbol F, \boldsymbol W}(\boldsymbol v_t^{(a)}\boldsymbol v_t^{(a)H}) =& \frac{1}{T}\sum_{g,k}(\boldsymbol s_{kg}-\boldsymbol x_{kg}^{(a)})(\boldsymbol s_{kg}-\boldsymbol x_{kg}^{(a)})^{H}+\boldsymbol\Delta,\\
\mathbb E_{\boldsymbol F, \boldsymbol W}(\boldsymbol v_t^{(a)}\boldsymbol v_t^{(b)H}) =& \frac{1}{T}\sum_{g,k}(\boldsymbol s_{kg}-\boldsymbol x_{kg}^{(a)})(\boldsymbol s_{kg}-\boldsymbol x_{kg}^{(b)})^{H}+\boldsymbol\Delta.
\end{split}
\]
\end{small}
We further define a vector $\tilde{\boldsymbol v}_t\triangleq [\boldsymbol v_t^{(1)},\dots, \boldsymbol v_t^{(n)}]^T$, which is a stitched vector of $\boldsymbol v_t^{(a)}$. It is obvious that the vector $\tilde{\boldsymbol v}_t$ is a Gaussian vector with zero mean, and we define $\mathcal G$ as its covariance matrix. Then, the term $\beta_t$ can be rewritten as
\begin{align}
\mathbb E_{\boldsymbol F, \boldsymbol W}\big(\exp(-\sigma_w^{-2}\sum\limits_{a}||\boldsymbol v_t^{(a)}||^2)\big)= |\mathbf I_{Mn}+\sigma_w^{-2}\mathcal G|^{-1}.\nonumber
\end{align}
We notice that the matrix $\mathcal G$ can be separated into $n\times n$ blocks of size $M\times M$, and the block in the $a$th row and $b$th column satisfies $\mathcal G_{ab} = \mathbb E_{\boldsymbol F, \boldsymbol W}(\boldsymbol v_t^{(a)}\boldsymbol v_t^{(b)H})$. Giving the overlaps
\begin{align}
\rho\mathcal C_g &= \frac{1}{K}\sum\limits_{k = 1}^K\mathbf s_{kg}\mathbf s_{kg}^H,\
\mathcal M_g^{(a)} = \frac{1}{K}\sum\limits_{k = 1}^K\mathbf x_{kg}^{(a)}\mathbf s_{kg}^H,\nonumber\\
\mathcal Q_g^{(a)} &= \frac{1}{K}\sum\limits_{k = 1}^K\mathbf x_{kg}^{(a)}\mathbf x_{kg}^{(a)^H},\
\mathcal T_g^{(ab)} = \frac{1}{K}\sum\limits_{k = 1}^K\mathbf x_{kg}^{(a)}\mathbf x_{kg}^{(b)H}\label{cone},
\end{align}
we have
\begin{align}
\mathcal G_{aa} &= \frac{K}{T}\left(\sum\limits_{g = 1}^G\rho\mathcal C_g-\mathcal M_g^{(a)}-\mathcal M_g^{(a)H}+\mathcal Q_g^{(a)}\right)+\boldsymbol\Delta,\nonumber\\
\mathcal G_{ab} &= \frac{K}{T}\left(\sum\limits_{g = 1}^G\rho\mathcal C_g-\mathcal M_g^{(a)}-\mathcal M_g^{(b)H}+\mathcal T_g^{(ab)}\right)+\boldsymbol\Delta.\label{cov}
\end{align}
We note that the term $\beta_t$ is completely determined by the overlaps in (\ref{cone}). For further simplifying the equation (\ref{Ezn}), we introduce an identity seen in (\ref{id}), which is similarly provided in \cite{2012Probabilistic, AMPdecoder}. For concise, we use the notation
\begin{align}
{\rm d}&\boldsymbol\Theta\nonumber\\
 = &\prod_{g,a}{\rm d}\hat{\mathcal Q}_g^{(a)}{\rm d}\hat{\mathcal Q}_g^{(a)*}{\rm d}\hat{\mathcal M}_g^{(a)}{\rm d}\hat{\mathcal M}_g^{(a)*}{\rm d}\mathcal Q_g^{(a)}{\rm d}\mathcal Q_g^{(a)*}{\rm d}\mathcal M_g^{(a)}{\rm d}\mathcal M_g^{(a)*}\nonumber\\
 &\times\prod_{g,a,b}{\rm d}\hat{\mathcal T}_g^{(ab)}{\rm d}\hat{\mathcal T}_g^{(ab)*}{\rm d}\mathcal T_g^{(ab)}{\rm d}\mathcal T_g^{(ab)*}.\nonumber
\end{align}
Note that we have introduced the conjugated parameters $\{\hat{\mathcal Q}_g^{(a)}, \hat{\mathcal M}_g^{(a)}, \hat{\mathcal T}_g^{(ab)}\}$ to enforce the consistency conditions (\ref{cone}). We then consider the replica-symmetry assumption, which is a general consideration in the replica analysis. Under this, we have the following equations.
\setcounter{equation}{30}
\begin{align}
&\hat{\mathcal M}_g^{(a)} = \hat{\mathcal M}_g,\quad \hat{\mathcal Q}_g^{(a)} = \hat{\mathcal Q}_g,\quad \hat{\mathcal T}_g^{(ab)} = \hat{\mathcal T}_g,\nonumber\\
&\mathcal M_g^{(a)} = \mathcal M_g,\quad \mathcal Q_g^{(a)} = \mathcal Q_g,\quad \mathcal T_g^{(ab)} = \mathcal T_g.\label{r-s2}
\end{align}
Further combining the Hubbard-Stratonovich transform \cite{RepTan2002}
\begin{equation}
\begin{split}
&e^{\frac{1}{2}\sum\limits_{a\neq b}\boldsymbol x_{g}^{(a)H}(\hat{\mathcal T}_g+\hat{\mathcal T}_g^H)\boldsymbol x_{g}^{(b)}+\frac{1}{2}\sum\limits_{a}\boldsymbol x_{g}^{(a)H}(\hat{\mathcal T}_g+\hat{\mathcal T}_g^H)\boldsymbol x_{g}^{(a)}}\\
= &\int{\rm D}\boldsymbol ze^{\sum\limits_{a}\boldsymbol x_{g}^{(a)H}(\frac{\hat{\mathcal T}_g+\hat{\mathcal T}_g^H}{2})^{1/2}\boldsymbol z+\boldsymbol z^H\sum\limits_{a}(\frac{\hat{\mathcal T}_g+\hat{\mathcal T}_g^H}{2})^{1/2}\boldsymbol x_{g}^{(a)}},\nonumber
\end{split}
\end{equation}
the equation (\ref{Ezn}) can be re-formulated as (\ref{ezhs}).

We now turn to consider the $n\to 0$ limit. Recall the expression of $\beta_t$ in (\ref{ezhs}). By combining the equations in (\ref{cov}), the matrix $\mathcal G$ can be rewritten as
\begin{align}
\mathcal G = \boldsymbol{\amalg}_n&\otimes\left(\frac{K}{T}(\sum_g\rho\mathcal C_g-\mathcal M_g-\mathcal M_g^{H}+\mathcal T_g)+\boldsymbol\Delta\right)\nonumber\\
&+\mathbf I_n\otimes\left(\frac{K}{T}\sum_g(\mathcal Q_g-\mathcal T_g)\right),\nonumber
\end{align}
where $\boldsymbol{\amalg}_n$ stands for the $n\times n$ matrix with elements all equal to one. The eigenvalue set of $\mathcal G$ consists of the $M$ eigenvalues of matrix \begin{small}$\frac{K}{T}(\sum_g\rho\mathcal C_g-\mathcal M_g-\mathcal M_g^{H}+\mathcal T_g)+\boldsymbol\Delta$\end{small}, and $(n-1)$ groups of $M$ eigenvalues with each group consisting the eigenvalues of \begin{small}$\frac{K}{T}\sum_g(\mathcal Q_g-\mathcal T_g)$\end{small}. As a consequence, $\beta_t$ can be rewritten and further approximated in (\ref{bt}).
\setcounter{equation}{33}

Combining the final expression in (\ref{bt}) and with the approximation $$\int{\rm D}\boldsymbol zf(\boldsymbol z)^n = 1+n\int{\rm D}\boldsymbol z\log f(\boldsymbol z)\approx \exp\left( n\int{\rm D}\boldsymbol z\log f(\boldsymbol z)\right)$$ in the $n \to 0$ limit, we then obtain
\begin{equation}
\mathbb EZ^{n} = \int{\rm d}\boldsymbol\Theta \exp\left(Kn{\tilde\Phi} (\boldsymbol{\mathcal Q}, \boldsymbol{\mathcal T}, \boldsymbol{\mathcal M}, \boldsymbol{\hat{\mathcal Q}}, \boldsymbol{\hat{\mathcal T}}, \boldsymbol{\hat{\mathcal M}})\right),\label{eznf}
\end{equation}
where we have denoted $\boldsymbol{\mathcal Q}\triangleq\{\{\mathcal Q_g\}_{g = 1}^G, \{\mathcal Q^*_g\}_{g = 1}^G\}$, $\boldsymbol{\mathcal T}\triangleq\{\{\mathcal T_g\}_{g = 1}^G, \{\mathcal T^*_g\}_{g = 1}^G\}$, $\boldsymbol{\mathcal M}\triangleq\{\{\mathcal M_g\}_{g = 1}^G, \{\mathcal M^*_g\}_{g = 1}^G\}$, $\boldsymbol{\hat{\mathcal Q}}\triangleq\{\{\hat{\mathcal Q}_g\}_{g = 1}^G, \{\hat{\mathcal Q}^*_g\}_{g = 1}^G\}$, $\boldsymbol{\hat{\mathcal T}}\triangleq\{\{\hat{\mathcal T}_g\}_{g = 1}^G, \{\hat{\mathcal T}^*_g\}_{g = 1}^G\}$,
$\boldsymbol{\hat{\mathcal M}}\triangleq\{\{\hat{\mathcal M}_g\}_{g = 1}^G, \{\hat{\mathcal M}^*_g\}_{g = 1}^G\}$ as the collections of the involved variables, and the function $\tilde{\Phi}$ is specified in the equation (\ref{phi0}).

\setcounter{equation}{37}
The saddle point method \cite{AMPdecoder} is performed by taking the extremum of the (\ref{phi0}) with respect to the free parameters. Accordingly, we have
\begin{equation}
\Phi = {\rm extr}\left({\tilde\Phi} (\boldsymbol{\mathcal Q}, \boldsymbol{\mathcal T}, \boldsymbol{\mathcal M}, \boldsymbol{\hat{\mathcal Q}}, \boldsymbol{\hat{\mathcal T}}, \boldsymbol{\hat{\mathcal M}})\right),\nonumber
\end{equation}
where ${\rm extr}(\cdot)$ returns the extremum of the involved function with respect to its arguments. Setting derivatives of $\tilde{\Phi}$ with respect to $\mathcal M_g$, $\mathcal T_g$, $\mathcal Q_g-\mathcal T_g$, $\forall g$ and all their conjugations into zero, then we can obtain the self-consistent equations as follow.

\begin{align}
\forall g, \hat{\mathcal M}_g^H = &\alpha\mathcal W_g, \quad\hat{\mathcal Q}_g^H+\hat{\mathcal T}_g^H = \alpha\mathcal W_g.\nonumber\\
\forall g, \hat{\mathcal T}_g^H = &\alpha\mathcal W_g(\frac{T}{K}\boldsymbol\Delta+\sum\limits_{g}(\mathcal \rho\mathcal C_g-\mathcal M_g-\mathcal M_g^H+\mathcal T_g))\mathcal W_g,\nonumber
\end{align}
where we denote $$\mathcal W_g\triangleq (\frac{T}{K}\boldsymbol\Delta+\sum\nolimits_{g}(\mathcal Q_g-\mathcal T_g))^{-1}.$$

Under Bayes-optimal condition, we have $\mathcal Q_g = \rho\mathcal C_g$ and $\mathcal M_g = \mathcal T_g = \mathcal M_g^H$, resulting in $\hat{\mathcal M}_g^H = \hat{\mathcal T}_g^H = \alpha\mathcal W_g$ and $\hat{\mathcal Q}_g^H = \boldsymbol 0$, $\forall g$. By further considering the Nishimori identity introduced in the Sec. II-B of \cite{2012Probabilistic}, we have the relation $\mathcal EG = \sum_g \mathcal E_g = \sum_g \mathcal Q_g-\mathcal T_g$. As a consequence, after ignoring the irrelevant constant, the expression of the free entropy function with respect to $\mathcal E$ can be obtained as (\ref{general_FE}).

\subsection{Proof of Theorem 2}
\label{T2}
\setcounter{TempEqCnt}{\value{equation}}
\setcounter{equation}{35}
\begin{figure*}[btp]
\begin{align}
\Phi(\tau)& = -{\rm Tr}\left(\frac{\sum_g\rho\sigma_g^2-\tau}{\sigma_w^2+\frac{\tau}{\alpha}}\right)
-\alpha M-\alpha M\log\left(\sigma_w^2+\frac{\tau}{\alpha}\right)+\sum\limits_g\int{\rm d}\boldsymbol s_g((1-\rho)\delta(\boldsymbol s_g)+\rho Q_g(\boldsymbol s_g))\int{\rm D}\boldsymbol z\nonumber\\
&\times\log\left(\int{\rm d}\boldsymbol x_g((1-\rho)\delta(\boldsymbol x_g)+\rho Q_g(\boldsymbol x_g))\right.\left.\exp\left(-\boldsymbol x_g^H(\sigma_w^2+\frac{\tau}{\alpha})^{-1}\boldsymbol x_g+2\mathfrak R\left(\boldsymbol x_g^H\left((\sigma_w^2+\frac{\tau}{\alpha})^{-1}\boldsymbol s_g+(\sigma_w^2+\frac{\tau}{\alpha})^{-\frac{1}{2}}\boldsymbol z\right)\right)\right)\right).\label{tau1}
\end{align}
\begin{align}
\tilde{\mathcal I}_g^{(1)} &= (1-\rho)\int{\rm D}\boldsymbol z\log\left((1-\rho)+\rho\int{\rm d}\boldsymbol x_g Q_g(\boldsymbol x_g)\exp\left(-\boldsymbol x_g^H(\sigma_w^2+\frac{\tau}{\alpha})^{-1}\boldsymbol x_g+2\mathfrak R\left(\boldsymbol x_g^H(\sigma_w^2+\frac{\tau}{\alpha})^{-\frac{1}{2}}\boldsymbol z\right)\right)\right)\nonumber\\
&= (1-\rho)\int{\rm D}\boldsymbol z\log\left((1-\rho)+\rho \frac{\exp\left(||\boldsymbol z||^2\left(\sigma_w^2+\frac{\tau}{\alpha}\right)^{-1}\left(\sigma_g^{-2}+\left(\sigma_w^2+\frac{\tau}{\alpha}\right)^{-1}\right)^{-1}\right)}
{\sigma_g^{-2M}\left(\sigma_g^{-2}+\left(\sigma_w^2+\frac{\tau}{\alpha}\right)^{-1}\right)^M}\right)\nonumber\\
& = (1-\rho)\int{\rm D}\boldsymbol z\log\left((1-\rho)+\rho\left(\frac{\frac{1}{\alpha}\tau+\sigma_w^2}{\frac{1}{\alpha}\tau+\sigma_w^2+\sigma_g^2}\right)^M\exp\left(\frac{||\boldsymbol z||^2\sigma_g^2}{\sigma_g^2+\sigma_w^2+\frac{1}{\alpha}\tau}\right)\right).\label{quai1}\\
\tilde{\mathcal I}_g^{(2)} = &\rho\int{\rm D}\boldsymbol z{\rm d}\boldsymbol s_gQ_g(\boldsymbol s_g)\log\left((1-\rho)+\rho\int{\rm d}\boldsymbol x_g Q_g(\boldsymbol x_g)\exp\left(-\boldsymbol x_g^H(\sigma_w^2+\frac{\tau}{\alpha})^{-1}\boldsymbol x_g\right.\right.\left.\left.+2\mathfrak R\left(\boldsymbol x_g^H\left((\sigma_w^2+\frac{\tau}{\alpha})^{-1}\boldsymbol s_g+(\sigma_w^2+\frac{\tau}{\alpha})^{-\frac{1}{2}}\boldsymbol z\right)\right)\right)\right)\nonumber\\
 = & \rho\int{\rm D}\boldsymbol z\log\left((1-\rho)+\rho\int{\rm d}\boldsymbol x_g Q_g(\boldsymbol x_g)\exp\left(\boldsymbol x_g^H(\sigma_w^2+\frac{\tau}{\alpha})^{-1}\boldsymbol x_g+2\mathfrak R\left(\sqrt{\frac{\sigma_g^2+\sigma_w^2+\frac{\tau}{\alpha}}{\left(\sigma_w^2+\frac{\tau}{\alpha}\right)^2}}\boldsymbol x_g^H\boldsymbol z\right)\right)\right)\nonumber\\
= &\rho\int{\rm D}\boldsymbol z\log\left((1-\rho)+\rho\left(\frac{\frac{1}{\alpha}\tau+\sigma_w^2}{\frac{1}{\alpha}\tau+\sigma_w^2+\sigma_g^2}\right)^M\exp\left(\frac{\sigma_g^2}{\sigma_w^2+\frac{\tau}{\alpha}}||\boldsymbol z||^2\right)\right).\label{quai2}
\end{align}
\setcounter{equation}{38}
\begin{align}
\Phi(\mathcal E) = &-{\rm Tr}\left(\left(\boldsymbol\Delta+\frac{\mathcal EG}{\alpha}\right)^{-1}\left(\sum\limits_g\rho\mathcal C_g-\mathcal EG\right)\right)-\alpha M-\alpha \log\left|\boldsymbol\Delta+\frac{\mathcal EG}{\alpha}\right|\nonumber\\
&+\sum\limits_g\int{\rm d}\boldsymbol s_g((1-\rho)\delta(\boldsymbol s_g)+\rho Q_g(\boldsymbol s_g))\int{\rm D}\boldsymbol z\log\left((1-\rho)+\rho\frac{|\boldsymbol V_g|}{|\mathcal C_g|}\exp\{\boldsymbol\mu_g^H\boldsymbol V_g^{-1}\boldsymbol\mu_g\}\right).\label{decou0}
\end{align}
\begin{align}
\boldsymbol V_g = &\boldsymbol U_g\boldsymbol\Lambda_g\boldsymbol U_g^H-\boldsymbol U_g\boldsymbol\Lambda_g\boldsymbol U_g^H(\boldsymbol U_g\boldsymbol\Lambda_g\boldsymbol U_g^H+\boldsymbol\Delta+\frac{1}{\alpha}\tilde{\boldsymbol U}\tilde{\boldsymbol\Xi}\tilde{\boldsymbol U}^H)^{-1}\boldsymbol U_g\boldsymbol\Lambda_g\boldsymbol U_g^H\nonumber\\
 = &\boldsymbol U_g\boldsymbol\Lambda_g\boldsymbol U_g^H-\boldsymbol U_g\boldsymbol\Lambda_g\boldsymbol U_g^H(\boldsymbol U_g(\boldsymbol\Lambda_g+\sigma_w^2\mathbf I_g+\frac{1}{\alpha}\boldsymbol\Xi_g)\boldsymbol U_g^H+\sum\limits_{j\neq g}\boldsymbol U_j(\sigma_w^2\mathbf I_j+\frac{1}{\alpha}\boldsymbol\Xi_j)\boldsymbol U_j^H)^{-1}\boldsymbol U_g\boldsymbol\Lambda_g\boldsymbol U_g^H\nonumber\\
 = &\boldsymbol U_g\boldsymbol\Lambda_g\boldsymbol U_g^H-\boldsymbol U_g\boldsymbol\Lambda_g\boldsymbol U_g^H(\boldsymbol U_g(\boldsymbol\Lambda_g+\sigma_w^2\mathbf I_g+\frac{1}{\alpha}\boldsymbol\Xi_g)^{-1}\boldsymbol U_g^H+\sum\limits_{j\neq g}\boldsymbol U_j(\sigma_w^2\mathbf I_j+\frac{1}{\alpha}\boldsymbol\Xi_j)^{-1}\boldsymbol U_j^H)\boldsymbol U_g\boldsymbol\Lambda_g\boldsymbol U_g^H\nonumber\\
 = &\boldsymbol U_g\boldsymbol\Lambda_g\boldsymbol U_g^H-\boldsymbol U_g\boldsymbol\Lambda_g\boldsymbol U_g^H(\boldsymbol U_g(\boldsymbol\Lambda_g+\sigma_w^2\mathbf I_g+\frac{1}{\alpha}\boldsymbol\Xi_g)^{-1}\boldsymbol U_g^H+\sum\limits_{j\neq g}\boldsymbol U_j(\sigma_w^2\mathbf I_j)^{-1}\boldsymbol U_j^H)\boldsymbol U_g\boldsymbol\Lambda_g\boldsymbol U_g^H\nonumber\\
 = &\boldsymbol U_g\boldsymbol\Lambda_g\boldsymbol U_g^H-\boldsymbol U_g\boldsymbol\Lambda_g\boldsymbol U_g^H(\boldsymbol U_g\boldsymbol\Lambda_g\boldsymbol U_g^H+\boldsymbol\Delta+\frac{1}{\alpha}\boldsymbol U_g\boldsymbol\Xi_g\boldsymbol U_g^H)^{-1}\boldsymbol U_g\boldsymbol\Lambda_g\boldsymbol U_g^H=\mathcal C_g-\mathcal C_g(\mathcal C_g+\boldsymbol\Delta+\frac{1}{\alpha}\mathcal E_g)^{-1}\mathcal C_g.\label{decou1}
\end{align}

\hrulefill
\end{figure*}
\setcounter{equation}{\value{TempEqCnt}}
Substituting $\mathcal E = \tau G^{-1}\mathbf I$ into the general free entropy equation (7) in Sec. III-A, we can get the equation (\ref{tau1}). Do some math, we further get
$\Phi(\tau) = -\alpha M\left(\frac{\sigma_w^2}{\frac{1}{\alpha}\tau+\sigma_w^2}+\log(\sigma_w^2+\frac{1}{\alpha}\tau)\right)+\sum\nolimits_g\tilde{\mathcal I}_g^{(1)}+ \sum\nolimits_g\tilde{\mathcal I}_g^{(2)}
$, where the quantities $\tilde{\mathcal I}_g^{(1)}$ and $\tilde{\mathcal I}_g^{(2)}$ are shown in (\ref{quai1})-(\ref{quai2}).
We note that the derivation of $\tilde{\mathcal I}_g^{(2)}$ utilizes the fact that $\boldsymbol s_g$ and $\boldsymbol z$ is independent. As a consequence, we get (14) in the Sec. IV-A of the paper.
\subsection{Proof of Proposition 1}
\label{P1}
\setcounter{equation}{39}
By adopting the fact that $\exp{x}\le x+1$ for $x>0$, we get following inequalities
\[\begin{split}
&\exp\left(-\frac{\sigma_g^2}{\sigma_w^2+\frac{1}{\alpha}\tau}\right)< \frac{\frac{1}{\alpha}\tau+\sigma_w^2}{\frac{1}{\alpha}\tau+\sigma_w^2+\sigma_g^2},\\
&\exp\left(-\frac{\sigma_g^2}{\sigma_g^2+\sigma_w^2+\frac{1}{\alpha}\tau}\right)>
\frac{\frac{1}{\alpha}\tau+\sigma_w^2}{\frac{1}{\alpha}\tau+\sigma_w^2+\sigma_g^2}.
\end{split}\]
Using the law of large numbers, the term $||\boldsymbol z||^2\to M$. As a consequence, as $M\to\infty$, we have
\[\begin{split}
(1-\rho)&\exp\left(-\frac{||\boldsymbol z||^2\sigma_g^2}{\sigma_w^2+\frac{1}{\alpha}\tau}\right)+\rho\left(\frac{\frac{1}{\alpha}\tau+\sigma_w^2}{\frac{1}{\alpha}\tau+\sigma_w^2+\sigma_g^2}\right)^M\\
\approx& \rho\left(\frac{\frac{1}{\alpha}\tau+\sigma_w^2}{\frac{1}{\alpha}\tau+\sigma_w^2+\sigma_g^2}\right)^M,\\ (1-\rho)&\exp\left(-\frac{||\boldsymbol z||^2\sigma_g^2}{\sigma_g^2+\sigma_w^2+\frac{1}{\alpha}\tau}\right)+\rho\left(\frac{\frac{1}{\alpha}\tau+\sigma_w^2}{\frac{1}{\alpha}\tau+\sigma_w^2+\sigma_g^2}\right)^M\\
\approx& (1-\rho)\exp\left(-\frac{\sigma_g^2}{\sigma_g^2+\sigma_w^2+\frac{1}{\alpha}\tau}\right)^M.
\end{split}\]

As a consequence, the free entropy function (14) is reduced to (15), as we can see in the Sec. IV-A of the paper.
The first order derivative of function (15) is given by
$
\frac{\partial\Phi}{\partial \tau} = \frac{1}{\tau+\alpha\sigma_w^2}(\rho\sum_g\frac{\sigma_g^2}{\frac{1}{\alpha}\tau+\sigma_w^2+\sigma_g^2}-\frac{\tau}{\frac{1}{\alpha}\tau+\sigma_w^2})$,
and we can prove the term $\rho\sum_g\frac{\sigma_g^2}{\frac{1}{\alpha}\tau+\sigma_w^2+\sigma_g^2}$ is a monotonically decreasing with respect to $\tau$, and the term $\frac{\tau}{\frac{1}{\alpha}\tau+\sigma_w^2}$ is monotonically increasing with respect to $\tau$. Since there exists interaction in the regions of such two terms, there has and only has one point that fulfills the equation $\rho\sum_g\frac{\sigma_g^2}{\frac{1}{\alpha}\tau+\sigma_w^2+\sigma_g^2} = \frac{\tau}{\frac{1}{\alpha}\tau+\sigma_w^2}$, and the point corresponds to the global maximum of the function (15).

\subsection{Proof of Lemma 1}
\label{L_1}
We note that the stationary point of the free energy function (7) in the Sec. III-A with respect to every $\mathcal E_g^{(\text{s})}$ for each user group fulfills the following equation
\begin{align}
\mathcal E_g^{(\text{s})} = &\mathbb E_{\boldsymbol s_g, \boldsymbol z}\left[\left(\eta_g\left(\boldsymbol s_g+\boldsymbol\Sigma^{(\text{s})\frac{1}{2}}\boldsymbol z\right)-\boldsymbol s_g\right.\right)\nonumber\\
&\times\left(\left.\eta_g\left(\boldsymbol s_g+\boldsymbol\Sigma^{(\text{s})\frac{1}{2}}\boldsymbol z\right)-\boldsymbol s_g\right)^H\right],\nonumber
\end{align}
where we define $\boldsymbol\Sigma^{(\text{s})} \triangleq \boldsymbol\Delta+\frac{\mathcal E^{(\text{s})}G}{\alpha}$. Further denoting $\hat{\boldsymbol s}_g^{(\text{s})}\triangleq \boldsymbol s_g+\boldsymbol\Sigma^{(\text{s})\frac{1}{2}}\boldsymbol z$, we have
\begin{align}
\mathcal E_g^{(\text{s})} = &\mathbb E_{\boldsymbol s_g, \hat{\boldsymbol s}_g^{(\text{s})}}\left(\left(\eta_g\left(\hat{\boldsymbol s}_g^{(\text{s})}\right)-\boldsymbol s_g\right)
\left(\eta_g\left(\hat{\boldsymbol s}_g^{(\text{s})}\right)-\boldsymbol s_g\right)^H\right)\nonumber\\
 = &\mathbb E_{\hat{\boldsymbol s}_g^{(\text{s})}}\left(\mathbb E_{\boldsymbol s_g| \hat{\boldsymbol s}_g^{(\text{s})}}\left(\left(\eta_g\left(\hat{\boldsymbol s}_g^{(\text{s})}\right)-\boldsymbol s_g\right)
\left(\eta_g\left(\hat{\boldsymbol s}_g^{(\text{s})}\right)-\boldsymbol s_g\right)^H\right)\right).\nonumber
\end{align}
Combining the channel model in Sec. V, we have
\begin{align}
&p(\boldsymbol s_g|\hat{\boldsymbol s}_g^{(\text{s})}) =  (1-{\hat\rho})\delta(\boldsymbol s_g)\nonumber\\
+& {\hat\rho}\mathcal{CN}\big(\boldsymbol s_g; (\boldsymbol\Sigma^{(\text{s})-1}+\mathcal C_g^{-1})^{-1}\boldsymbol \Sigma^{(\text{s})-1}\hat{\boldsymbol s}_g^{(\text{s})}, (\boldsymbol\Sigma^{(\text{s})-1}+\mathcal C_g^{-1})^{-1}\big),\nonumber
\end{align}
where
$$\hat\rho \triangleq \frac{\rho\mathcal{CN}(\hat{\boldsymbol s}_g^{(\text{s})}; \boldsymbol 0, \mathcal C_g+\boldsymbol\Sigma^{(\text{s})})}{\rho\mathcal{CN}(\hat{\boldsymbol s}_g^{(\text{s})}; \boldsymbol 0, \mathcal C_g+\boldsymbol\Sigma^{(\text{s})})+(1-\rho)\mathcal{CN}(\hat{\boldsymbol s}_g; \boldsymbol 0, \boldsymbol\Sigma^{(\text{s})})}.$$
We further use the matrix inversion lemma, and we obtain $(\mathcal C_g^{-1}+\boldsymbol\Sigma^{(\text{s})-1})^{-1} = \mathcal C_g-\mathcal C_g(\mathcal C_g+\boldsymbol\Sigma^{(\text{s})})^{-1}\mathcal C_g$. Together with $\mathcal C_g = \boldsymbol U_g\boldsymbol\Lambda_g\boldsymbol U_g^H$, we finally have $\mathcal E_g^{(\text{s})} = \boldsymbol U_g\boldsymbol\Xi_g^{(\text{s})}\boldsymbol U_g^H$, with
\[\begin{split}
\boldsymbol\Xi_g^{(\text{s})}& = \mathbb E_{\hat{\boldsymbol s}_g}\left[\hat\rho(1-\hat\rho)\left(\boldsymbol\Lambda_g\boldsymbol U_g^H-\boldsymbol\Lambda_g\boldsymbol U_g^H(\mathcal C_g+\boldsymbol\Sigma^{(\text{s})})^{-1}\mathcal C_g\right)\right.\\
&\times\boldsymbol\Sigma^{(\text{s})-1}\hat{\boldsymbol s}_g\hat{\boldsymbol s}_g^H\boldsymbol\Sigma^{(\text{s})-1}\left(\boldsymbol U_g\boldsymbol\Lambda_g-\mathcal C_g(\mathcal C_g+\boldsymbol\Sigma^{(\text{s})})^{-1}\boldsymbol U_g\boldsymbol\Lambda_g\right)\\
&\left.+\hat\rho\left(\boldsymbol\Lambda_g-\boldsymbol\Lambda_g\boldsymbol U_g^H(\mathcal C_g+\boldsymbol\Sigma^{(\text{s})})^{-1}\boldsymbol U_g\boldsymbol\Lambda_g\right)\right].
\end{split}\]

\subsection{Proof of Theorem 3}
\label{T_3}
Note that by considering the specific functional form of $Q_g(\boldsymbol s_g) = \rho\mathcal {CN}(\boldsymbol s_g; \boldsymbol 0, \boldsymbol U_g\boldsymbol\Lambda_g\boldsymbol U_g^H)$, the free entropy function (7) can be rewritten as shown in the equation (\ref{decou0}), where we define
\begin{align}
\boldsymbol V_g = &\mathcal C_g-\mathcal C_g(\mathcal C_g+\boldsymbol\Delta+\frac{1}{\alpha}G\mathcal E)^{-1}\mathcal C_g, \nonumber\\ \boldsymbol\mu_g = &\boldsymbol V_g\left(\left(\boldsymbol \Delta+\frac{G\mathcal E}{\alpha}\right)^{-1}\boldsymbol s_g+\left(\boldsymbol \Delta+\frac{G\mathcal E}{\alpha}\right)^{-\frac{1}{2}}\boldsymbol z\right).\nonumber
\end{align}
Using the matrix inversion lemma together with the condition $\mathcal E =
\sum_g\boldsymbol U_g\boldsymbol\Xi_g\boldsymbol U_g^H$, we obtain the derivations in (\ref{decou1}), where the quantity $\tilde{\boldsymbol U}\triangleq [\boldsymbol U_1,\dots, \boldsymbol U_G]$ is a \emph{tall unitary} matrix and $\tilde{\boldsymbol\Xi}$ is a $\sum_gr_g$ square block diagonal matrix with $\boldsymbol\Xi_g$ as the $g$th block. Similarly, we have
$
\boldsymbol\mu_g = \boldsymbol V_g((\boldsymbol \Delta+\frac{\mathcal E_g}{\alpha})^{-1}\boldsymbol s_g+(\boldsymbol \Delta+\frac{\mathcal E_g}{\alpha})^{-\frac{1}{2}}\boldsymbol z)
$.
 As a result, we get $\Phi(\mathcal E) = \sum_{g = 0}^G\Phi_g(\mathcal E_g)$, with $\Phi_g(\mathcal E_g)$ in the form of (21) in the Sec. V-A of the paper.

\subsection{Proof of Proposition 2}\label{dp}
The definition of the detection probability is $P_D = {\rm Pr}\{\text{T}(\underline{\boldsymbol s_g})>l_g|a_g = 1\}$ with threshold $l_g$. To specify the expression of $P_D$, we define a random variable $C_{g,m}$ with one realization $$c_{g,m} = 2|\underline{s_{g,m}}|^2\left(\sigma_w^2+\frac{1}{\alpha}\xi_{g,m}^{\star}+\lambda_{g,m}\right)^{-1},$$ and as a result, we have $C_{g,m}\sim \chi^2(2)$, i.e., $p(c_{g,m})= \frac{1}{2}\exp(-\frac{1}{2}c_{g,m})$. Similarly, we further define a random variable $D_{g,m}$ with one realization $$d_{g,m} = \frac{1}{2}\lambda_{g,m}c_{g,m}\left(\sigma_w^2+\frac{1}{\alpha}\xi_{g,m}^{\star}\right)^{-1}.$$ Then, we get the PDF of random variable $D_{g,m}$ as
\begin{align}
p_{D_{g,m}}(d_{g,m}) = & (\sigma_w^2+\frac{1}{\alpha}\xi_{g,m}^{\star})\lambda_{g,m}^{-1}\nonumber\\
&\exp(-(\sigma_w^2+\frac{1}{\alpha}\xi_{g,m}^{\star})\lambda_{g,m}^{-1}d_{g,m})\mathbb I(d_{g,m}>0),\nonumber
\end{align}
where $\mathbb I(d_{g,m}>0)$ is the indicator function. Note that $p_{D_{g,m}}(d_{g,m};\omega_{g,m})$ is an exponential distribution with parameter $\omega_{g,m}\triangleq (\sigma_w^2+\frac{1}{\alpha}\xi_{g,m}^{\star})\lambda_{g,m}^{-1}$. We further assume $\omega_{g,m}$ are all distinct for each $m$ which is common in spatially correlated channel. Based on the theorem of sum of independent exponentially distributed
random variables in \cite{2013Noteson}, the PDF of the sufficient statistic with condition $a_g = 1$ can be written as
\setcounter{equation}{40}
\begin{equation}
p(\text{T}(\underline{\boldsymbol s_g})|a_g = 1) = \sum\limits_{m = 1}^{r_g}\prod_{j\neq m}\frac{\omega_{g,j}p_{D_{g,m}}(\text{T}(\underline{\boldsymbol s_g});\omega_{g,m})}{\omega_{g,j}-\omega_{g,m}},\label{pdfT}
\end{equation}
%
Combining the definition, the missed detection probability $P_D$ can be then obtained via the cumulative distribution function (CDF) of (\ref{pdfT}). Similarly, the sufficient statistic with condition $a_g = 0$ can be formulated as
\begin{equation}
p(\text{T}(\underline{\boldsymbol s_g})|a_g = 0) = \sum_{m = 1}^{r_g}\prod_{j\neq m}\frac{\tilde{\omega}_{g,j}p_{D_{g,m}}(\text{T}(\underline{\boldsymbol s_g});\tilde{\omega}_{g,m})}{\tilde{\omega}_{g,j}-\tilde{\omega}_{g,j}},
\end{equation}
where $\tilde{\omega}_{g,m}\triangleq (\sigma_w^2+\frac{1}{\alpha}\xi_{g,m}^{\star}+\lambda_{g,m})\lambda_{g,m}^{-1}$. As a consequence, based on the CDF of (\ref{pdfT}), the expression of false alarm probability with definition $P_F = {\rm Pr}\{\text{T}(\underline{\boldsymbol s_g})>l_g|a_g = 0\}$ can be then specified, as shown in (25) in the Sec.V-B of the paper.

\bibliography{a}
\end{document}